\newcommand{\Occ}{\mathit{Occ}}
\newcommand{\LZ}{\mathsf{LZ}}
\newcommand{\LZRR}{\mathsf{LZRR}}
\newcommand{\LZr}{\mathsf{LZ'}}
\newcommand{\LPF}{\mathsf{LPF}}
\newcommand{\LNF}{\mathsf{LNF}}
\newcommand{\LZOR}{\mathsf{LZOR}}
\newcommand{\LCE}{\mathsf{lcp}}
\newcommand{\LCP}{\mathsf{LCP}}
\newcommand{\SA}{\mathsf{SA}}
\newcommand{\ISA}{\mathsf{ISA}}
\newcommand{\LP}{\mathsf{LP}}
\newcommand{\LF}{\mathsf{LF}}
\newcommand{\union}{\mathit{Union}}
\newcommand{\find}{\mathit{Find}}
\newcommand{\makeset}{\mathit{MakeSet}}
\newcommand{\nil}{\mathsf{NIL}}
\newcommand{\jmax}{j_{\mathit{max}}}
\newcommand{\kmax}{k_{\mathit{max}}}
\newcommand{\ellmax}{\ell_{\mathit{max}}}
\newcommand{\vroot}{\mathit{source}}
\newtheorem{theorem}{Theorem}
\newtheorem{lemma}[theorem]{Lemma}
\newtheorem{corollary}[theorem]{Corollary}
\newlength{\figurewidth}
\newlength{\smallfigurewidth}
\begin{document}
    
\title
{\large
\textbf{LZRR: LZ77 Parsing with Right Reference}
}    
\author{%
Takaaki Nishimoto$^{\ast}$ and Yasuo Tabei$^{\ast}$\\[0.5em]
{\small\begin{minipage}{\linewidth}\begin{center}
\begin{tabular}{ccc}
$^{\ast}$RIKEN Center for Advanced Intelligence Project & \hspace*{0.5in} \\
\url{{takaaki.nishimoto, yasuo.tabei}@riken.jp}
\end{tabular}
\end{center}\end{minipage}}
}

\maketitle
\thispagestyle{empty}

\begin{abstract}
Lossless data compression has been widely studied in computer science. 
One of the most widely used lossless data compressions is \emph{Lempel-Zip}~(LZ) 77 parsing, which achieves a high compression ratio. 
\emph{Bidirectional} (a.k.a. \emph{macro}) 
parsing is a lossless data compression and computes a sequence of phrases copied from another substring (\emph{target phrase}) on either the left or the right position in an input string.  
Gagie et al.~(LATIN 2018) recently showed that 
a large gap exists between the number of smallest bidirectional phrases of a given string and that of LZ77 phrases.
In addition, finding the smallest bidirectional parse of a given text is NP-complete. 
Several variants of bidirectional parsing have been proposed thus far, but 
no prior work for bidirectional parsing has achieved high compression that is smaller than that of LZ77 phrasing for any string. 
In this paper, we present the first practical bidirectional parsing named \emph{LZ77 parsing with right reference (LZRR)}, in which the number of LZRR phrases is theoretically guaranteed to be smaller than the number of LZ77 phrases.
Experimental results using benchmark strings show the number of LZRR phrases is approximately five percent smaller than that of LZ77 phrases. 
\end{abstract}
\section{Introduction}
Lossless data compression has been widely studied in computer science. 
One of the most widely used lossless data compressions is \emph{Lempel-Zip}~(LZ) 77 parsing~\cite{LZ76}, which compresses a given string by computing a sequence of phrases copied from the longest substring on the left position in an input string.
LZ77 parsing has a long research history, with the first paper on it published in 1976~\cite{LZ76}. 
Many LZ77's extensions have since been proposed ~(e.g.,~\cite{DBLP:conf/dcc/KreftN10,DBLP:journals/tcs/Jez16,DBLP:journals/tcs/Rytter03}), 
and LZ77 parsing achieves the smallest compression ratio among them.

\emph{Bidirectional} (a.k.a. \emph{macro}) parsing~\cite{DBLP:journals/jacm/StorerS82} is a lossless data compression and computes a sequence of phrases copied from another substring (\emph{target phrase}) on either the left or right position in an input string.
Each set of LZ77 phrases is convertible into a set of bidirectional phrases, and the number of phrases in the smallest bidirectional parsing is less than that of LZ77 phrases. 
Gagie et al.~\cite{DBLP:conf/latin/GagieNP18} recently showed the number of LZ77 phrases $z$ representing an input string of length $n$ can be tightly bounded by the smallest number of bidirectional phrases $b^*$ representing the same string as $z = O(b^{*} \log(n/b^{*}))$, 
which suggests that a large gap exists between $b^{*}$ and $z$.
In addition, finding the smallest bidirectional parse of a given text is NP-complete~\cite{DBLP:journals/jacm/StorerS82}. 
Thus, an important open challenge is to 
develop a polynomial time bidirectional parsing such that the number of bidirectional phrases is smaller than that of LZ77 phrases.  

Several variants of bidirectional parsing have been proposed thus far. 
\emph{Lex-parsing}~\cite{DBLP:journals/corr/abs-1803-09517} is a bidirectional parsing that computes a sequence of bidirectional phrases that each occurred previously on a suffix array of a string.
The number of phrases $v$ in the lex-parsing is bounded by $v = O(b^{*} \log (n/b^{*}))$~\cite{DBLP:conf/latin/GagieNP18}.
Although the lex-parsing is effective for most benchmark strings (i.e., phrases $v$ is very close to $z$) in practice, it can fail to compress some strings (i.e., $v$ is much larger than $z$)~\cite{DBLP:journals/corr/abs-1803-09517}. 
\emph{Lcpcomp}~\cite{DBLP:conf/wea/DinklageFKLS17} and 
a bidirectional parsing using Burrows-Wheeler transform (BWT)~\cite{DBLP:conf/latin/GagieNP18} have also been proposed, and 
they never have fewer phrases than lex-parse~\cite{DBLP:journals/corr/abs-1803-09517}.
Kempa and Prezza proposed a parsing algorithm for computing 
the bidirectional parse of an input string for a given \emph{string attractor} of the string~\cite{DBLP:conf/stoc/KempaP18}. 
The number of the bidirectional phrases is bounded by $O(\gamma \log (n/\gamma))$, 
where $\gamma$ is the size of the string attractor. 
Let $\gamma^{*}$ be the size of the smallest string attractor for a given string.
Then $b^{*} = O(\gamma^{*} \log (n/\gamma^{*}))$ holds~\cite{DBLP:conf/stoc/KempaP18}. 
In addition, finding the smallest string attractor of a given string is also NP-complete~\cite{DBLP:conf/stoc/KempaP18}.
In summary, no prior bidirectional parsing achieves high compression that is smaller than that of LZ77 phrasing for any string. 

In this paper, we present the first practical bidirectional parsing named \emph{LZ77 parsing with right reference (LZRR)} 
in which the number of LZRR phrases is always smaller than the number of LZ77 phrases by a large margin. 
LZRR is a polynomial time algorithm that greedily computes phrases from a string in the left-to-right order the same as LZ77. 
The main difference between LZRR and LZ77 is the way to compute their phrases. 
Whereas LZ77 parsing chooses the longest substring occurring previously as a phrase, 
LZRR parsing uses not only previous occurrences of each phrase but also subsequent occurrences~(i.e., it chooses the longest substring occurring previously or subsequently as a phrase). 
For this reason, the number of LZRR phrases is theoretically guaranteed to be no more than that of LZ77 phrases. 
Experimental results using benchmark datasets show the number of LZRR phrases is approximately five percent smaller than that of LZ77 phrases. 
\section{Preliminaries} \label{sec:preliminary}
Let $\Sigma$ be an ordered alphabet of size $\sigma$, 
$T$ be a string of length $n$ over $\Sigma$ and 
$|T|$ be the length of $T$.
Let $T[i]$ be the $i$-th character of $T$ and 
$T[i..j]$ be the substring of $T$ that begins at position $i$ and ends at position $j$. 
$T[i..]$ denotes the suffix of $T$ beginning at position $i$, i.e., $T[i..n]$. 
Let $T^R$ be the reversed string of $T$, i.e., $T^R = T[n] T[n-1] \cdots T[1]$. 

$\Occ(T,s)$ denotes all the occurrence positions of string $s$ in string $T$, i.e., 
$\Occ(T, s) = \{i \mid s = T[i,i+|s|-1], 1 \leq i \leq n-|s|+1\}$. 
Let $\LCE(i, j)$ be the length of the longest common prefix~(LCP) of $T[i..]$ and $T[j..]$. 
For two strings $x$ and $y$, $x \prec y$ represents that 
$x$ is lexicographically smaller than $y$, we write $x \prec y$. 
Similarly, for a string $z$,  $x \preceq_{z} y$ represents that  the LCP of $x$ and $z$ is equal to or longer than that of $y$ and $z$. 
For example, $aab \preceq_{aac} ab$. 
 
Our model of computation is a unit-cost word RAM with a machine word size of $\Omega(\log_2 n)$ bits. 
We evaluate the space complexity in terms of the number of machine words. A bitwise evaluation of 
space complexity can be obtained with a $\log_2 n$ multiplicative factor. 
\subsection{Arrays}

\emph{Suffix array} $\SA$, \emph{inverse suffix array} $\ISA$, \emph{LCP array} $\LCP$, \emph{longest previous factor array} $\LPF$, and \emph{sorted suffix array} $\SA_{i}$ are integer arrays of length $n$ for a string $T$, respectively.
$\SA$ is the permutation of $[1..n]$ such that $T[SA[1]..] \prec \cdots \prec T[SA[n]..]$ holds. 
$\ISA$ is the permutation of $[1..n]$ such that $\SA[\ISA[i]] = i$ holds for any $i \in \{1,2,...,n\}$. 
$\LCP[1] = 0$ and $\LCP[i] = \LCE(\SA[i], \SA[i-1])$ for $i \in \{ 2, 3, \ldots, n \}$. 
$\LPF[i]$ stores the length of the longest prefix of $T[i..]$ occurring previously; that is 
$\LPF[1] = 0$ and $\LPF[i] = \max \{ \LCE(i, p) \mid p \in \{ 1..i-1 \} \}$, where 
$\max$ returns the maximal element of a given set. 
$\SA_{k}$ is the sorted starting 
positions of suffixes in decreasing order for the length of the LCP with $T[k..]$. 
Formally, for an integer $k \in \{ 1, 2, \ldots, n \}$, 
$\SA_{k}$ is a permutation of $[1..n]$ such that $T[SA_{k}[1]..] \preceq_{T[k..]} \cdots \preceq_{T[k..]} T[SA_{k}[n]..]$. 
$\SA_{k}$ is not unique when there exist two positions $i$ and $j$ such that $\LCE(k, i) = \LCE(k, j)$. 

For $T=abababaabb$, 
$\SA = 7, 5, 3, 1, 8, 10 , 6 , 4, 2, 9$, $\ISA = 4, 9, 3, 8, 2, 7,  1, 5, 10, 6$, 
$\LCP = 0, 1, 3, 5, 2, 0, 1, 2, 4, 1$, $\LPF = 0, 0, 5, 4, 3, 2, 1, 2, 1, 1$, and 
$\SA_{1} = 1, 3, 5, 8, 7, 10, 6, 4, 2, 9$. 
\subsection{Union-find data structure}
\emph{Union-find} is a data structure for disjoint sets and supports the following operations for disjoint set $\mathcal{D}$: $\makeset$, $\union$, $\find$.
$\makeset$ adds element $\{ m + 1 \}$ into $\mathcal{D}$ and returns the integer where $m$ is the cardinality of  $\mathcal{D}$. 
$\union(x, y)$ merges two sets $X, Y \in \mathcal{D}$ containing $x$ and $y$, respectively; it adds a new set $X \cup Y$ into $D$; it removes $X$ and $Y$ from $\mathcal{D}$. 
The $\find(x)$ returns the id of the set containing $x$ in $\mathcal{D}$. 
The union-find data structure performs $\makeset$, $\union$, $\find$ operations in $O(m + p + q \alpha_{p+q}(p))$ time, 
while using $O(m)$ space~\cite{DBLP:journals/jacm/Tarjan75}, where $p$ and $q$ are the numbers of $\union$ and $\find$ operations, respectively,  
and $\alpha_{k}$ is the inverse of the $k$-th row of Ackermann function.

\subsection{Bidirectional phrases and partial bidirectional phrases}
\emph{Bidirectional phrases}~(BP)~\cite{DBLP:journals/jacm/StorerS82} of string $T$ is a partition of $T$ as 
substrings (phrases) $B=f_{1}, f_{2}, \ldots, f_{b}$ such that 
each $f_{i}=T[s_i..s_i+\ell-1]$ is (i) either copied from another substring $T[t_i..t_i+\ell-1]$ (\emph{target phrase}) with $s_i \neq t_i$, which can overlap $T[s_i..s_i+\ell-1]$, or 
(ii) an explicit character (\emph{character phrase}), i.e., $f_{i}=T[s_i]$. 
Target phrase $f_{i}$ is denoted as a pair $\langle t_{i}, |f_{i}| \rangle$  of 
the \emph{reference position} $t_{i}$ and the length $|f_{i}|$ of $f_{i}$. 
The substring $T[t_{i}..t_{i} + |f_{i}|-1]$ is called the \emph{reference string} of $f_{i}$. 

The original string $T$ can be recovered from BP $B$ by 
referring to a finite number of phrases from each $f_i$ in $B$. 
If an infinite loop of phrases referred from any $f_i$ exists, 
the original string $T$ cannot be recovered from $B$.
If $T$ can be recovered from $B$, $B$ is said to be a \emph{valid} BP of $T$; 
otherwise, $B$ is said to be \emph{invalid} BP of $T$.

The value of the phrase reached from position $x$ in $k$ iterations of references is formally defined as $g^{k} : \{1, \ldots, n\} \rightarrow \{1, \ldots, n \} \cup \Sigma$. 
For $k=0$, if $T[x]$ is a character phrase, $g^{0}(x) = T[x]$; otherwise $g^{0}(x) = t_{p} + (x - s_{p})$, 
where $p$ is the integer such that $s_{p} \leq x < s_{p+1}$ holds for $s_{b+1} = n+1$. 
For $k \geq 1$, we define $g^{k}(x)$ as follows:
\begin{eqnarray}
	g^{k}(x) = \left\{
	\begin{array}{ll}
		g^{k-1}(x) & \mbox{if } g^{k-1}(x)\in \Sigma,  \\
	    g^{0}(g^{k-1}(x)) & \mbox{otherwise}.
	\end{array}
	\right.
\end{eqnarray}
If $g^{n}[x] \in \Sigma$ holds, then there are no infinite loops of references containing $x$. 
Therefore, $B$ is valid BP of $T$ if $B$ has no infinite loops of references, i.e., 
$g^{n}[x] \in \Sigma$ holds for all $x$. 

For example, let $B = \langle 3, 2 \rangle, a, b, \langle 2, 3 \rangle$ and $B' = \langle 3, 2 \rangle, \langle 1, 2 \rangle, b, a, b$ be BPs of $T=ababbab$. 
Then $B$ is valid since $g^{1}(1), \ldots, g^{1}(7) = a, b , a, b, 4, a, b$ and $g^{2}(5)= b$. 
On the other hand, $B'$ is invalid since $g^{7}(1), \ldots, g^{7}(7) = 1, 2, 3, 4, b, a, b$.  

\emph{LZ77 phrases}~\cite{LZ76} of string $T$ are a specialization of BP and defined as 
the bidirectional phrases that are all selected from previously seen substrings.
Since there is no infinite loops of references on phrases, LZ77 phrases of $T$ are always valid BP of $T$. 
Formally, let $\LZ(T) = f_1, f_2, \ldots, f_z$ of $T$ be valid BP of $T$ such that 
$|f_{i}| = \max \{ 1, \LPF[s_{i}] \}$ for each $i \in \{1,...,z\}$. 

LZRR parsing gradually builds the valid BP from the start position of $T$ in the left-to-right order. 
A subsequence of the valid BP is called \emph{partial bidirectional phrases~(PBP)} 
and is defined as a BP $P = f_{1}, f_{2}, \ldots, f_{k}$ for a prefix of $T$ 
that can be copied from 
any substring of $T$, i.e., $t_{i} \in \{ 1, \ldots, n \} \setminus \{ s_{i} \}$ 
for all $i \in \{ 1, \ldots, k \}$ for a target phrase $f_{i}$, which avoids a self copy.

The concatenation of such PBP $P$ and every character phrase referred from $P$ can recover the prefix of $T$ with a finite number of references. 
Such PBP are called valid PBP, 
and other PBP are called invalid PBP. 
Formally, let $B_{P} = P \cdot f'_{1}, f'_{2}, \ldots, f'_{k'}$ be the concatenation of PBP $P$ and 
the remaining character phrases $f^\prime_{1}, f^\prime_{2}, \ldots, f^\prime_{k}$ equivalent to suffix $T[(n - k' + 1)..]$. 
$P$ is valid if $B_{P}$ is valid; otherwise $P$ is invalid.
For example, let $P = \langle 3, 2 \rangle, \langle 6, 2 \rangle$ be a PBP of $T=ababbab$. 
Then $B_{P} = \langle 3, 2 \rangle, \langle 6, 2 \rangle, b, a, b$.

The original string of a PBP can be recovered by iteratively referring to phrases starting from each target phrase in a finite number of times until the character phrase is found. 
Thus, the position of each character phrase can be seen as the \emph{source} for positions of target/character phrases.
Formally, for a PBP $P$ and position $x \in \{ 1, \ldots, n\}$ on $T$, 
$\vroot(P , x)$ returns source $y \in \mathcal{N}$ of $x$ in $B_{P}$, 
i.e., position $y$ satisfying either 
(i) $g^{k}(x) = y$ and $g^{k+1}(x) \in \Sigma$ for an integer $k$ or 
(ii) $x = y$ and $g^{0}(x) \in \Sigma$. 
For the above example, the source of the position $1$ is the position $6$ in $P$ since $g^{0}(1) = 3$, $g^{1}(1) = 6$ and $g^{2}(1) = a$.

\section{LZRR}\label{sec:algo}
A key idea of LZRR parsing is to compute the valid BP from an input text $T$ 
by gradually computing the valid PBP from the head of $T$ in the left-to-right order. 
LZRR parsing computes whole LZRR phrases initialized as zero phrase for an input string 
in two steps: 
(i) it computes candidates of the reference positions of 
the longest valid phrase following the current LZRR phrase; and 
(ii) it computes the valid (possibly character) phrase with 
the maximum length among extensions starting from those candidates.
Steps (i) and (ii) are iterated until whole LZRR phrases are computed.

LZRR parsing uses two major functions of LP and LF for steps (i) and (ii), respectively.  
Given a valid PBP $P$ of $T$, LP function $\LP(P)$ returns the longest valid phrase following $P$, i.e., the longest phrase $f$ such that $P \cdot f$ is a valid PBP of $T$. 
Given a valid PBP $P$ of $T$ and reference position 
$j \in \{ 1, \ldots, n \}$, LF function $\LF(P,j)$ returns 
the length of the longest valid phrase having reference position $j$ and following $P$, 
i.e., $\LF(P, j) = \max (\{ 0 \} \cup \{ \ell \mid \ell \in \{ 1, 2, \ldots, \LCE(i, j) \}, P \cdot \langle j , \ell \rangle \mbox{ is valid} \}$) 
where $i$ is the starting position of the phrase following $P$.
LZRR parsing computes LZRR phrases as the valid BP $\LZRR(T) = \LP(P_{0}), \ldots , \LP(P_{b-1})$ of $T$　
where $P_{p}$ is the first $p$ LZRR phrases for each $p \in \{ 0, 1, \ldots , b\}$ and 
$b$ is the number of LZRR phrases of $T$. 
The LZRR phrases of $T$ are not unique. 

For example, let $P_{1} =  \langle 3, 5 \rangle$ be the first LZRR phrase of $T=abababaababa$. 
$\LF(P_{1}, 1), $ $\ldots, \LF(P_{1}, 12) = 0, 0, 0, 0, 0, 0, 0, 0, 2, 0, 2, 0$. 
LZRR parsing chooses phrase $\langle 9, 2 \rangle$ or $\langle 11, 2 \rangle$ as the next one.

This paper shows the following two theorems. 
\begin{theorem}\label{theo:algo}
	For a given string $T$, LZRR parsing computes $\LZRR(T)$ in $O(n^2 \alpha_{n^2}(n^2))$ time using $O(n)$ working space. 
\end{theorem}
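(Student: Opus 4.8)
The plan is to bound separately the one-time preprocessing, the per-phrase cost of the primitives $\LP$ and $\LF$, and the global cost of the union--find structure that certifies validity, and then to sum these over the at most $n$ phrases of $\LZRR(T)$. First I would set up the static arrays: $\SA$, $\ISA$, $\LCP$ and $\LPF$ are all built once in $O(n)$ time and $O(n)$ space by standard suffix-array machinery, and the union--find $\mathcal{D}$ is initialised with $n$ applications of $\makeset$, one per text position, so that each class records the current source (the position of the character phrase from which it is ultimately copied). These are charged to the $m = O(n)$ term of the union--find bound.

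Next I would analyse a single call $\LP(P_{p-1})$ producing the phrase that starts at position $i$. For step (i) the candidate reference positions are exactly the positions in the order induced by $\SA_{i}$, i.e. by non-increasing $\LCE(i,\cdot)$; rather than materialising $\SA_{i}$ I would stream the candidates by expanding the suffix-array interval around $\ISA[i]$ in both directions. Along each direction $\LCE(i,\cdot)$ is the running minimum of $\LCP$ and is therefore non-increasing, so a two-pointer merge of the two non-increasing streams yields all candidates in the required order in $O(n)$ time per phrase. For step (ii) each candidate $j$ is handed to $\LF(P,j)$, which must return the largest $\ell \le \LCE(i,j)$ with $P\cdot\langle j,\ell\rangle$ valid. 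I would implement this by tentatively committing the offset references $i+d \mapsto j+d$ one at a time with $\union$, using $\find$ to test whether the class of the source of $j+d$ already coincides with the class of $i+d$ (precisely the cycle condition that makes $g^{n}$ fail to reach $\Sigma$), stopping at the first offset that closes a loop, and then undoing the tentative merges with $\back$; only the references of the phrase finally chosen by $\LP$ are committed permanently.

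For the running time I would account for the union--find operations globally. Since the chosen phrases partition $T$, every text position is permanently merged at most once, when it is absorbed by its phrase, so the committed $\union$ operations and the candidate-enumeration work total $O(n)$ per phrase and hence $O(n^{2})$ overall. The key quantitative claim, and the step I expect to be the main obstacle, is that for a single phrase the tentative $\union$/$\find$/$\back$ operations performed while scanning all candidate reference positions also total $O(n)$: this requires showing that the cycle-detection scan, pruned as soon as $\LCE(i,j)$ no longer exceeds the best valid length found so far, examines only $O(n)$ offsets in aggregate, rather than $\Theta(n)$ offsets for each of $\Theta(n)$ candidates. Granting this, the operation counts satisfy $p = O(n^{2})$ and $q = O(n^{2})$ while $m = O(n)$, so the union--find bound $O(m + p + q\,\alpha_{p+q}(p))$ evaluates to $O(n^{2}\,\alpha_{n^{2}}(n^{2}))$, which dominates the $O(n^{2})$ spent on candidate enumeration.

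Finally I would verify the space bound, which is the easier half: the four arrays and the union--find each occupy $O(n)$ words, the candidates are streamed rather than stored, and because every tentative merge is rolled back by $\back$ the structure never holds more than the $O(n)$ committed classes at once, giving $O(n)$ working space. The correctness obligations that make the timing argument meaningful—that the $\find$/$\union$ test really characterises validity of $P\cdot\langle j,\ell\rangle$ via $\vroot$, and that $\LF$ returns the true maximal length—I would fold into the companion argument that $\LZRR(T)$ is a valid bidirectional parse.
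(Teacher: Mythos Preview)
Your high-level decomposition matches the paper's, but the step you single out as ``the main obstacle'' is precisely where your argument has a gap, and the paper's fix is simpler than the per-phrase bound you are reaching for. You want to show that, for a single phrase starting at $i$, the tentative union--find work summed over \emph{all} candidate reference positions is $O(n)$. The paper never proves this, and it may fail: nothing rules out $\Theta(n)$ candidates each triggering an $\LF$ scan of length $\Theta(|f_p|)$ with $|f_p|$ large. Instead the paper aggregates across phrases. Each call $\LF(P,j)$ performs $O(\LF(P,j)+1)$ operations, and $\LF(P,j)\le \ellmax = |f_p|$ for every candidate $j$ examined while building the $p$-th phrase; since trivially at most $n$ candidates are examined per phrase, the $p$-th phrase costs $O(n\,|f_p|)$ union--find operations. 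Summing over $p$ and using $\sum_p |f_p|=n$ gives $O(n^2)$ operations in total, hence $O(n^2\,\alpha_{n^2}(n^2))$ time. No per-phrase $O(n)$ bound is needed.

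A secondary discrepancy is how you undo tentative merges. You propose $\back$, but the Tarjan bound $O(m+p+q\,\alpha_{p+q}(p))$ you invoke afterwards is for the structure \emph{without} rollback; you would have to argue separately that the inverse-Ackermann amortisation survives deunions, which is not automatic. The paper sidesteps this entirely (Lemma~\ref{lem:lf}): each $\LF(P,j)$ leaves the main union--find for $\mathcal{D}_P$ untouched and instead builds a fresh auxiliary union--find $L'$ on $O(\LF(P,j))$ elements, which it discards afterwards. All operations then live in ordinary union--find structures, and the quoted bound applies directly.
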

\begin{theorem}\label{theo:bound}
    $|\LZRR(T)| \leq  |\LZ(T^{R})|$ holds.
\end{theorem}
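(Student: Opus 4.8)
The plan is to exhibit a specific valid bidirectional parse of $T$ derived from $\LZ(T^R)$ that has exactly $|\LZ(T^R)|$ phrases and all of whose references point strictly to the right, and then to argue that the greedy LZRR parse, which always appends the longest valid phrase via $\LP$, can never use more phrases than this reference parse.

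First I would translate $\LZ(T^R)$ into a parse of $T$. Writing $T^R[i]=T[n+1-i]$, a reference to a previous occurrence in $T^R$ (to the left, i.e.\ a smaller index) corresponds under the bijection $i\mapsto n+1-i$ to a reference to the right in $T$. So reversing each phrase of $\LZ(T^R)$ and reading them in the reversed order yields a BP $Q=h_1,\dots,h_m$ of $T$ with $m=|\LZ(T^R)|$. The first key step is to show that $Q$ is valid and has the \emph{right-referencing} property: for every target phrase $\langle t,\ell\rangle$ of $Q$ starting at position $s$ we have $t>s$, equivalently $g^{0}(x)>x$ for every position $x$ covered by a target phrase. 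This follows because in $T^R$ the LZ77 map $g^{0}$ strictly decreases positions (each phrase copies from a strictly earlier occurrence), and the reversal turns ``strictly decreasing'' into ``strictly increasing''; validity is preserved because it is exactly acyclicity of the reference graph, which is invariant under the bijection.

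The second, and I expect the main, step is a validity lemma: appending to \emph{any} valid PBP $P$ covering a prefix $T[1..e]$ a phrase $f=\langle t',\ell'\rangle$ starting at $e+1$ with $t'>e+1$ always produces a valid PBP. The point is that every position inside $f$ is mapped by $g^{0}$ strictly to the right, and its image is at least $t'\ge e+2>e$; hence any chain of references entering $f$ stays strictly to the right of the already-parsed prefix $T[1..e]$, never returns into $P$, and—moving strictly right within the bounded phrase—must exit into the character-phrase region beyond $f$ and terminate. Thus no infinite loop can be created, independently of the internal references of $P$. I expect the delicate part here to be the bookkeeping that chains originating inside $P$ (which may now flow into $f$ instead of reaching former character phrases) also terminate, which again follows from the strictly rightward movement inside $f$.

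Finally I would run a greedy prefix induction. Let $d_0=0<d_1<\dots<d_m=n$ be the phrase ends of $Q$ and $e_0=0<e_1<\dots$ those of $\LZRR(T)$. Assuming $e_p\ge d_p$, I would take the $Q$-phrase covering position $e_p+1$ and restrict it to its suffix starting at $e_p+1$: this gives a target (or character) phrase of length $d_{p+1}-e_p$ whose reference is a shifted copy, still right-referencing and still a correct copy, hence by the validity lemma a valid continuation of the LZRR prefix $P_p$. Because $\LP(P_p)$ returns the longest valid phrase, $|\LP(P_p)|\ge d_{p+1}-e_p$, giving $e_{p+1}\ge d_{p+1}$. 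With $e_0=d_0=0$ the induction yields $e_m\ge d_m=n$, so LZRR reaches the end of $T$ within $m$ phrases and $|\LZRR(T)|\le m=|\LZ(T^R)|$.
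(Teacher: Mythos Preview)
Your approach is correct and rests on the same key lemma as the paper (appending a strictly right-referencing phrase to any valid PBP yields a valid PBP), but your packaging is more direct. The paper factors the bound through two auxiliary parsings: $\LZOR(T)$, the left-to-right greedy parse using only right references, and $\LZr(T)$, the right-to-left greedy parse using left references. It then chains three facts, $|\LZRR(T)|\le|\LZOR(T)|=|\LZr(T^R)|=|\LZ(T^R)|$, where the last equality (right-to-left and left-to-right greedy LZ77 produce the same number of phrases) requires its own combinatorial argument. You sidestep that lemma entirely: reversing $\LZ(T^R)$ already hands you a right-referencing valid parse $Q$ of $T$ with exactly $|\LZ(T^R)|$ phrases, and your prefix induction $e_p\ge d_p$ (equivalent to the paper's injective-map argument) then compares $\LZRR$ directly against $Q$. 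Your route is shorter; the paper's route has the side benefit of isolating $\LZOR$ and the identity $|\LZr|=|\LZ|$ as standalone facts. One small wrinkle in your write-up: in the inductive step the $Q$-phrase covering position $e_p+1$ need not be $h_{p+1}$ when $e_p\ge d_{p+1}$; it is some $h_q$ with $q\ge p+1$, so the candidate continuation has length $d_q-e_p$ rather than $d_{p+1}-e_p$, which only strengthens the conclusion $e_{p+1}\ge d_{p+1}$.
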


The LZRR parsing algorithm is presented in Section~\ref{sec:algo}.
Theorems~\ref{theo:algo} and~\ref{theo:bound} are shown in Section~\ref{sec:bound}.

\subsection{$\LP$ algorithm}\label{sec:lp}
\begin{algorithm}[t]
  \SetKwProg{Fn}{Function}{}{}
	Input:$P = f_{1}, \ldots , f_{p}$, Output: $\LP(P)$\;
  $(k, j) \leftarrow (1, \SA_{i}[1])$\;
  $(k, \jmax, \ellmax) \leftarrow (1, \nil, 0)$\;
  \While{true}{
      $j \leftarrow \SA_{i}[k++] $ \tcp{$i = |f_{1} \cdots f_{p}|+1$. }
      if $\LCE(i, j) \leq \ellmax$ then break\;
      $\ell \leftarrow \LF(P, j) $\;
      if $\ell > \ellmax$ then $(\jmax, \ellmax) \leftarrow (j, \ell)$\;
  }
  if $\ellmax > 0$ then \Return $\langle \jmax, \ellmax \rangle$ else \Return $T[i]$\;
  \caption{The $\LP$ algorithm.\label{algo:lp}}
\end{algorithm}
A straight forward computation of $\LP(P)$ is to compute reference position $\jmax$ such that 
$\LF(P, \jmax) = \max \{ \LF(P, 1), \ldots, \LF(P, n) \}$ and then compute 
$\ellmax = \LF(P, \jmax)$, which results in LZRR phrase $\langle \jmax, \ellmax \rangle$. 
This method takes $\Omega(n)$ time even if $\LF(P, j)$ can be computed in constant time for each position $j$.
Instead, we reduce the computation time of LF functions 
by leveraging the following fact: 
the length of the longest valid phrase of starting position $i$ and reference position $j$ is not larger than that of the LCP of $T[i..]$ and $T[j..]$. 
This fact suggests that after we find a phrase of length $\ell'$, 
we do not need to compute LF functions for any reference position $j$ such that 
the LCP of $T[i..]$ and $T[j..]$ is not longer than $\ell'$.
For an efficient computation, we sort reference positions in descending order 
with respect to the length of the LCP for $T[i..]$ and maintain those positions 
in the sorted suffix array $\SA_{i}$ of $i$. 
Then, we omit computing LF functions of reference positions on $\SA_{i}[\kmax+1..]$ for 
the left-most position $\kmax$ on $\SA_{i}$ such that 
the longest valid phrase starting at a reference position in $\SA_{i}[1..\kmax]$ is at least as long as the LCP of $T[i..]$ and $T[\SA_{i}[\kmax]..]$.
This is because $\jmax$ exists on $\SA_{i}[1..\kmax]$.
Thus, the following lemma holds. 
\begin{lemma}\label{lem:maxlf}
    Let $\ell_{k} = \max \{ \LF(P, \SA_{i}[1]), \ldots, \LF(P, \SA_{i}[k]) \}$ and 
    $\kmax$ be the left-most position on the $\SA_{i}$ such that $\ell_{\kmax} \geq \LCE(i, \SA_{i}[\kmax])$ holds. 
    Then $\ellmax = \ell_{\kmax}$ holds and $\SA_{i}[..\kmax]$ contains $\jmax$. 
\end{lemma}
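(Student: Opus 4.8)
The plan is to play the two monotonicities of the sorted array $\SA_{i}$ against each other. By the defining order of $\SA_{i}$, the sequence $\LCE(i, \SA_{i}[k])$ is non-increasing in $k$, whereas $\ell_{k}$, being a running maximum, is non-decreasing in $k$. The index $\kmax$ is exactly the first place where the rising sequence $\ell_{k}$ catches up to the falling sequence $\LCE(i, \SA_{i}[k])$. First I would note that this crossing is stable: for any $k' > \kmax$ we have $\ell_{k'} \ge \ell_{\kmax} \ge \LCE(i, \SA_{i}[\kmax]) \ge \LCE(i, \SA_{i}[k'])$, so the condition $\ell_{k} \ge \LCE(i, \SA_{i}[k])$ continues to hold beyond $\kmax$, and $\kmax$ is genuinely a first-crossing index.

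The only nontrivial ingredient I need is the bound $\LF(P, j) \le \LCE(i, j)$, which is immediate from the definition of $\LF$: a valid phrase with reference position $j$ cannot be longer than the longest common prefix of $T[i..]$ and $T[j..]$. I would also record at the outset that $\ellmax = \ell_{n}$, since $\SA_{i}$ is a permutation of $\{1, \ldots, n\}$ and $\ellmax$ is the maximum of $\LF(P, \cdot)$ over all positions. As $\ell_{k}$ is non-decreasing we already have $\ell_{\kmax} \le \ell_{n} = \ellmax$, so proving $\ellmax = \ell_{\kmax}$ reduces to showing that nothing located strictly after index $\kmax$ in $\SA_{i}$ can exceed $\ell_{\kmax}$.

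I would close this gap by chaining the ingredients above. For every $k > \kmax$,
\[
\LF(P, \SA_{i}[k]) \le \LCE(i, \SA_{i}[k]) \le \LCE(i, \SA_{i}[\kmax]) \le \ell_{\kmax},
\]
where the first inequality is the $\LF$ bound, the second is the non-increasing property of $\LCE$ (recall $k > \kmax$), and the third is the defining inequality of $\kmax$. Hence the maximum of $\LF(P, \cdot)$ over $\SA_{i}[\kmax+1..]$ is at most $\ell_{\kmax}$, and together with $\ell_{\kmax} = \max\{\LF(P, \SA_{i}[1]), \ldots, \LF(P, \SA_{i}[\kmax])\}$ this forces $\ell_{\kmax} = \ell_{n} = \ellmax$, the first assertion. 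The second assertion then comes for free: since the maximum of $\LF$ already attains its global value $\ellmax$ on the prefix $\SA_{i}[1..\kmax]$, some position in $\SA_{i}[1..\kmax]$ realizes $\ellmax$ and is thus a valid $\jmax$; equivalently, the first index at which the running maximum reaches $\ellmax$ is at most $\kmax$, which is precisely the $\jmax$ recorded by Algorithm~\ref{algo:lp}.

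The step I expect to demand the most care is not any of these one-line inequalities but the well-definedness of $\kmax$ itself, i.e.\ that a first crossing exists at all. Here I would argue that $\LCE(i, \SA_{i}[n])$, the smallest $\LCE$ value, never exceeds $\ellmax$ for any nontrivial string: position $n$ already forces this minimum to be at most $1$, and the only way it equals $1$ is that every suffix begins with $T[i]$, i.e.\ $T$ is a single repeated character, in which case a valid length-one phrase exists so that $\ellmax \ge 1$. Tie situations in $\SA_{i}$ and the degenerate case $\ellmax = 0$ (where $\LP(P)$ returns the character phrase $T[i]$) are absorbed by the same monotonicity bookkeeping, so they do not affect the two inequalities that carry the proof.
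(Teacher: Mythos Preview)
Your proof is correct and follows essentially the same approach as the paper's: both use the non-increasing order of $\LCE(i,\SA_{i}[k])$ together with the bound $\LF(P,j)\le\LCE(i,j)$ to conclude $\ellmax=\ell_{n}=\ell_{\kmax}$, from which the containment of $\jmax$ in $\SA_{i}[1..\kmax]$ is immediate. Your explicit chain of inequalities for $k>\kmax$ and your discussion of the well-definedness of $\kmax$ are more detailed than the paper's three-line argument, which leaves these points implicit.
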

\begin{proof}
See Appendix.
\end{proof}

Algorithm~\ref{algo:lp} shows the algorithm for computing $\LP(P)$ function and computes each LF function from the head of $\SA_{i}$.
When Algorithm~\ref{algo:lp} finds $\kmax$, 
it returns the current longest valid phrase. 

\subsection{$\LF$ algorithm}\label{sec:lf}
$\LF$ algorithm $\LF(P, j)$ finds the longest valid target phrase with reference position $j$ and following the PBP $P$ of $T$
by gradually extending the target phrase of length $1$ until it cannot find any reference string copying the target phrase. 
When PBP $P \cdot \langle j, \ell \rangle$  for $P$ and the target phrase $\langle j, \ell \rangle$ is computed one-by-one, it can include an infinite loop of references by a mutual reference of phrases. 
This is because PBP as a target phrase can be copied from the left and right reference strings. 
This can happen when for computing the extension $P \cdot \langle j, \ell \rangle$ the position of a target phrase in $P$ and $\langle j, \ell \rangle$ can be mutually reached with a finite number of references. 
The $\LP$ algorithm avoids such cases by using the union-find data structure built from PBP $P$. 

Each disjoint set in the union-find data structure includes string positions with the same source (character phrase) for PBP $P$. 
The union-find data structure is initialized as $n$ disjoint sets that all contain the unique position of the input string of length $n$.
If the union-find data structure for PBP $P \cdot \langle j, \ell-1 \rangle$ for $P$ and the target phrase $\langle j, \ell \rangle$ exists, the data structure for $P \cdot \langle j, \ell-1 \rangle$ can be updated by $\union(i+\ell-1, j+\ell-1)$ operation. 

The infinite loops of references can be detected using the find operation in the union-find data structure. 
When $P \cdot \langle j, \ell - 1 \rangle$ is a valid and the extension of 
starting position $i+\ell-1$ next to the PBP and reference position $j+\ell-1$ is computed, 
if $\find(i+\ell-1)$ is equal to $\find(j+\ell-1)$ if and only if infinite loops of references exist. 
$\LF$ algorithm checks this condition each time. 
Formally, the following corollary holds. 

\begin{corollary}\label{cor:da2} 
Let $Q = P \cdot \langle j, \ell \rangle$ be a valid PBP and 
$Q' = P \cdot \langle j, \ell + 1 \rangle$ be a PBP 
for an integer $\ell \in \{ 0, \ldots, \LCE(i,j)-1 \}$, 
and $\mathcal{D}_{P}$ be disjoint sets on $\{ 1, \ldots, n \}$ 
such that each set consists of all positions of the same source for a PBP $P$, 
where $P \cdot \langle j, 0 \rangle$ is $P$ and 
$i$ is the starting position of the last target phrase~(i.e., $\langle j, \ell \rangle$) in $Q$. 
(1) If $\find(i+\ell) \not = \find(j+\ell)$ holds on $\mathcal{D}_{Q}$, 
then $Q'$ is valid. Otherwise $Q'$ is invalid. 
(2) $\mathcal{D}_{Q'}$ is equal to the set created by $\union(i+\ell, j+\ell)$ on $\mathcal{D}_{Q}$.
\end{corollary}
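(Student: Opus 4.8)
The plan is to exploit the fact that the only structural difference between $B_{Q}$ and $B_{Q'}$ is a single reference: in $B_{Q}$ the position $i+\ell$ is a character phrase (the prefix covered by $Q$ ends at $i+\ell-1$), whereas in $B_{Q'}$ that position becomes a target position referring to $j+\ell$. Consequently $\vroot(Q, i+\ell)=i+\ell$, so in $\mathcal{D}_{Q}$ the position $i+\ell$ is the source of its own set. Every assertion will be reduced to tracking how this one new edge $i+\ell \to j+\ell$ affects reference chains, read off from the definition of $g^{k}$ and $\vroot$.

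For part (1), I would first observe that, since $Q$ is valid, $B_{Q}$ contains no infinite loop, so any loop in $B_{Q'}$ must use the new edge and hence pass through $i+\ell$. Such a loop exists precisely when, starting from $j+\ell$ and following references in $B_{Q}$, one returns to $i+\ell$; because $i+\ell$ is a character phrase (a terminal source) in $B_{Q}$, this happens if and only if $\vroot(Q, j+\ell)=i+\ell$. By the definition of $\mathcal{D}_{Q}$ this is equivalent to $\find(i+\ell)=\find(j+\ell)$. Thus $Q'$ is valid iff $\find(i+\ell)\neq\find(j+\ell)$, which gives both directions of (1).

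For part (2), I would work in the valid case $\find(i+\ell)\neq\find(j+\ell)$ and first show that the reference path of $j+\ell$ is identical in $B_{Q}$ and $B_{Q'}$: this path terminates at $\vroot(Q,j+\ell)\neq i+\ell$, and since $i+\ell$ is terminal in $B_{Q}$ it can never be an interior node of the path, so the single modified edge is never touched and $\vroot(Q',j+\ell)=\vroot(Q,j+\ell)$. Every position whose source was $i+\ell$ in $B_{Q}$ now continues through the new edge to $j+\ell$ and acquires the source $\vroot(Q,j+\ell)$, while every other position keeps its source because its chain avoids $i+\ell$. Hence the set of $i+\ell$ merges into the set of $j+\ell$ and all remaining sets are left intact, which is exactly $\union(i+\ell,j+\ell)$ applied to $\mathcal{D}_{Q}$.

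The main obstacle I anticipate is the bookkeeping needed to justify that no reference chain other than those originally ending at $i+\ell$ is altered. This rests entirely on the observation that $i+\ell$ is a terminal source in $B_{Q}$ and therefore cannot appear as an interior node of any chain, so rerouting it perturbs only the chains that previously ended there. Making this ``single-pointer modification'' argument precise through the recursive definition of $g^{k}$ is the one delicate step; once it is established, the remainder is a direct translation into the union-find vocabulary.
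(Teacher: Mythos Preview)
Your argument is correct. The paper states Corollary~\ref{cor:da2} without proof, treating it as an immediate consequence of the definitions of $\vroot$, $B_{P}$, and $\mathcal{D}_{P}$; there is no separate proof to compare against. What you have written is exactly the natural justification the paper leaves implicit: the sole difference between $B_{Q}$ and $B_{Q'}$ is the single edge $i+\ell\to j+\ell$, the position $i+\ell$ is terminal in $B_{Q}$ so it can only occur at the end of a reference chain, and therefore a cycle in $B_{Q'}$ arises iff $\vroot(Q,j+\ell)=i+\ell$, which is the $\find$ condition. Your handling of part~(2), restricting to the valid case and showing that only the class of $i+\ell$ is absorbed into that of $j+\ell$, is likewise the intended reading. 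The ``delicate step'' you flag (that no chain other than those terminating at $i+\ell$ is affected) is already fully justified by your observation that a terminal node cannot be interior to any chain, so no further work is needed there.
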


\begin{algorithm}[t]
  \SetKwProg{Fn}{Function}{}{}
  Input: PBP $P$, union-find data structure for disjoint sets of $P$, reference position $j$\; 
  Output: $\LF(P, j)$\;
  $\ell \leftarrow 1$\;
  \While{$T[i+\ell-1] = T[j+\ell-1]$ and $\find(i + \ell -1) = \find(j + \ell -1) $}{
      $\union(i + \ell -1, j + \ell -1)$\;
      $\ell \leftarrow \ell + 1$ \;
  }
 \Return $\ell - 1$\;
  \caption{The $\LF$ algorithm.\label{algo:lf}}
\end{algorithm}
Algorithm~\ref{algo:lf} shows the algorithm for computing $\LF(P, j)$ function 
using Corollary~\ref{cor:da2} and the algorithm stated previously.
Thus, we can compute the length $\ellmax$ of 
the longest valid target phrase with reference position $j$ and following the PBP $P$
by $O(\ellmax)$ union and find operations on the given union-find data structure for $\mathcal{D}_{P}$. 

Note that we need to modify Algorithm~\ref{algo:lf} for $\LP(P)$ algorithm. 
This is because $\LF$ algorithms in our $\LP(P)$ algorithm need 
the same union-find data structure determined by the PBP $P$. 
On the other hand, 
the given union-find data structure is changed by union operations in Algorithm~\ref{algo:lf}.
By modifying Algorithm~\ref{algo:lf} using an additional union-find data structure, we can compute $\LF(P, j)$ without updating the given union-find data structure. Formally, the following lemma holds.

\begin{lemma}\label{lem:lf}
    Given the union-found data structure $L$ for $\mathcal{D}_{P}$, 
    we can compute $\LF(P, j)$ in $O(n)$ 
    working space by $O(\ellmax)$ $\find$ operations on $L$ and
    $O(\ellmax)$ union and find operations on 
    an additional union-find data structure $L'$ for $O(\ellmax)$ disjoint sets. 
    $L'$ is disposed after $\LF(P, j)$ is computed.
\end{lemma}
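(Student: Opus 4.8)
The plan is to simulate Algorithm~\ref{algo:lf} step for step, but to perform every \emph{merge} on a small, disposable structure $L'$ while issuing only $\find$ queries to the given structure $L$. By Corollary~\ref{cor:da2}, computing $\LF(P,j)$ reduces to the following loop: at the $\ell$-th extension we test whether $i+\ell-1$ and $j+\ell-1$ lie in the same block of the partition $\mathcal{D}_{Q}$ with $Q=P\cdot\langle j,\ell-1\rangle$, stop and return $\ell-1$ if they do or if $T[i+\ell-1]\neq T[j+\ell-1]$, and otherwise pass to $\mathcal{D}_{Q'}$ by merging these two positions. The obstacle the lemma addresses is that carrying out these merges on $L$ itself would turn $L$ from a representation of $\mathcal{D}_{P}$ into one of a later partition, making $L$ unusable for the remaining $\LF$ calls that $\LP$ issues against the same $P$.

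First I would isolate the structural fact that makes the small structure suffice. Writing $\rho(x)=\find(x)$ evaluated on $L$ for the representative of the block containing $x$ in $\mathcal{D}_{P}$, the partition $\mathcal{D}_{Q}$ for $Q=P\cdot\langle j,\ell\rangle$ is exactly $\mathcal{D}_{P}$ with the pairs $(i+k,j+k)$, $k<\ell$, merged. Contracting each $\mathcal{D}_{P}$-block to its representative therefore turns these merges into edges $\{\rho(i+k),\rho(j+k)\}$ between representatives, and two positions share a block of $\mathcal{D}_{Q}$ if and only if their representatives are connected in this graph. Since $\rho$ is computed purely by $\find$ on $L$, and $\find$---even with path compression---reroutes parent pointers without altering the represented partition, $L$ keeps representing $\mathcal{D}_{P}$ after arbitrarily many such queries.

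Next I would run the loop while maintaining $L'$ as a union-find over the distinct representatives seen so far, together with an array $M[1..n]$ sending each representative to its handle in $L'$ (populated lazily, recording the touched indices). At step $\ell$ I compute $a=\rho(i+\ell-1)$ and $b=\rho(j+\ell-1)$ with two $\find$ calls on $L$, insert $a$ and $b$ into $L'$ by $\makeset$ if absent, and, when $T[i+\ell-1]=T[j+\ell-1]$, test whether $\find(M[a])=\find(M[b])$ on $L'$; equality signals that $i+\ell-1$ and $j+\ell-1$ already share a block of $\mathcal{D}_{Q}$, hence by Corollary~\ref{cor:da2} an infinite loop, so I stop and return $\ell-1$. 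Otherwise I perform $\union(M[a],M[b])$ on $L'$, which by the structural fact advances the graph on representatives from $\mathcal{D}_{Q}$ to $\mathcal{D}_{Q'}$, and continue. A short induction on $\ell$ then shows that this test matches the exact condition of Corollary~\ref{cor:da2} at every step, so the loop halts with the value $\ellmax=\LF(P,j)$.

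Finally I would read off the bounds: the loop iterates $\ellmax+1$ times, each iteration using $O(1)$ $\find$ operations on $L$ and $O(1)$ $\makeset$, $\union$, and $\find$ operations on $L'$, hence $O(\ellmax)$ queries on $L$ and $O(\ellmax)$ operations on $L'$ over at most $2\ellmax=O(\ellmax)$ distinct representatives. The structures $L'$ and $M$ fit in $O(n)$ words, and disposing of $L'$ only requires resetting the $O(\ellmax)$ recorded entries of $M$ and freeing $L'$, in $O(\ellmax)$ time, after which $L$ still represents $\mathcal{D}_{P}$. The step I expect to be the real work is the structural fact of the second paragraph---that connectivity among block representatives under the added edges coincides with the blocks of $\mathcal{D}_{Q}$, and that read-only $\find$ queries never perturb $\mathcal{D}_{P}$---since the entire reduction to the disposable structure $L'$ rests on it.
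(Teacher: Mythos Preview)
Your proposal is correct and matches the paper's architecture: simulate Algorithm~\ref{algo:lf} on a disposable secondary structure $L'$ together with an $O(n)$ mapping array (the paper calls it $W$, you call it $M$), while issuing only $\find$ queries on $L$; the operation counts and space you derive are exactly those claimed. The one noteworthy difference lies in the justifying lemma. The paper proves a structural fact about \emph{sources} (its Lemma~\ref{lem:da_part4}): $\vroot(P\cdot\langle j,\ell\rangle,x)$ differs from $\vroot(P,x)$ only when $\vroot(P,x)\in\{i,\ldots,i+\ell-1\}$, and since $L$ is arranged so that $\find$ returns the source position, a range test on the $\find$ result decides whether $L'$ must be consulted. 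You instead pass to the quotient by $\mathcal{D}_{P}$: the merges of Corollary~\ref{cor:da2} become edges among $L$-representatives, and $L'$ simply tracks connectivity in that graph. Your route is a bit more elementary---it needs no reasoning about sources and works for any choice of representative---while the paper's version makes explicit precisely which positions can have their source altered; both yield the same algorithm and the same bounds.
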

\begin{proof}
 See Appendix.
\end{proof}

\subsection{Computation of $\LZRR(T)$}\label{sec:complzrr} 
Since $\LF(P_{p}, j)$ algorithm for each $p \in \{ 0, \ldots, b-1 \}$ uses the union-find data structure for disjoint sets of the current LZRR phrases~(i.e., $P_{p}$), 
we update the union-find data structure when the $(p+1)$-th LZRR phrase is selected. 
This needs at most $|\LP(P_{p+1})|$ $\union$ operations by Corollary~\ref{cor:da2}.

\section{Theoretical analysis}\label{sec:bound}

\subsection{The proof of Theorem~\ref{theo:algo}}\label{sec:analysis}
We show that the working space of LZRR parsing is $O(n)$ space.
LZRR parsing needs two data structures: 
(1) the union-find data structures for $\LF$ algorithm and 
(2) the data structure to compute the sequence $W_{p}= \SA_{s_{p}}[1], \LCE(s_{p}, \SA_{s_{p}}[1]), \ldots, \SA_{s_{p}}[k_{p}], \LCE(i, \SA_{i}[k_{p}])$ for $\LP(P_{p-1})$ algorithm, where 
$k_{p}$ is $\kmax$ in $\LP(P_{p-1})$ algorithm and 
$s_{p}$ is the starting position of $p$-th LZRR phrase.

We can compute $W_{p}[1..k]$ in $O(k)$ time in an online manner 
using arrays of $\SA, \ISA$, and $\LCP$ for 
two integers $p$ and $k$~(See Appendix). 

$\SA$, $\ISA$, and $\LCP$ of a given a string $T$ can be constructed in $O(n)$ time and working space~\cite{DBLP:conf/icalp/KarkkainenS03,DBLP:conf/cpm/KasaiLAAP01}. 
Therefore, the second data structure can be constructed in $O(n)$ time and space, 
and the LZRR parsing algorithm runs in $O(n)$ working space.

Next, we show that the running time of LZRR parsing is $O(n^2 \alpha_{n^2}(n^2))$.
Let $G$ be the sequence of operations on disjoint-sets executed by LZRR parsing and 
$W$ be the sequence of $W_{1} \cdots W_{b}$, 
where $b$ is the number of phrases in $\LZRR(T)$.
Then the running time is the sum of the computation time for executing $G$ and computing $W$, and the prepossessing time of $\SA$, $\ISA$ and $\LCP$, which is $O(n)$. 

We show that $W$ can be computed in $O(n^2)$ time. 
For an integer $p \in \{ 1, \ldots, b \}$, 
$W_{p}$ can be computed in $O(k_{p}) = O(n)$ time.
This is because $k_{p} = |\Occ(T, f_{p})| \leq n$ holds since
$T[\SA_{i}[y]..]$ has $f_{p}$ as a prefix for all $y \in \{ 1 , \ldots , k_{p} \}$, 
where $f_{p}$ is the string represented by the $p$-th LZRR phrase. 
Thus, $|W| = O(n^2)$ since $b \leq n$. 
Hence $W$ can be computed in $O(n^2)$ time using the above online algorithm. 

We show that $G$ is performed in $O(n^2 \alpha_{n^2}(n^2))$ time.
$|G| = O(\sum_{p=1}^{b} (|f_{p}| \times k_{p}))$ holds because 
$\LP(P_{p})$ performs $O(k_{p+1} \times |f_{p+1}|)$ union and find operations for 
$p \in \{ 0, \ldots, b-1 \}$.
Since $|f_{1}| + \cdots + |f_{n}| = n$ and $|\Occ(T, f_{p})| \leq n$ for all $p$, 
$|G| = O(n^2)$ holds. 
Therefore, $G$ is performed in $O(n^2 \alpha_{n^2}(n^2))$ time by union-find data structures. 

As a result, we can compute $\LZRR(T)$ in $O(n^2 \alpha_{n^2}(n^2))$  time and $O(n)$ working space.  

\subsection{The proof of Theorem~\ref{theo:bound}}
We define two BPs $\LZr(T)$ and $\LZOR(T)$ for Theorem~\ref{theo:bound} and 
show three formulas: (1)$|\LZr(T)| = |\LZ(T)|$, (2)$|\LZRR(T)| \leq |\LZOR(T)|$, and (3) $|\LZOR(T)| = |\LZ'(T^{R})|$. 
Theorem~\ref{theo:bound} clearly holds in (1), (2), and (3), 
i.e., $\LZRR(T) \leq \LZ(T^{R})$. 
The detailed proofs are in Appendix. 

\smallskip
\noindent \textbf{The proof of $|\LZr(T)| = |\LZ(T)|$.}
$\LZr(T) = f_{1}, \ldots, f_{k}$ parses greedily $T$ in the right-to-left order such that each phrase is the longest substring occurring previously~(left) in $T$.

A key idea of this proof is that if $\LZr(T)$ chooses a substring as an $\LZr$ phrase, 
then there exists an LZ phrase starting at a position on the $\LZr$ phrase and including the ending position of the $\LZr$ phrase. 
This is because the $\LZr$ phrase occurs previously in $T$ and the LZ phrase is the longest substring occurring previously in $T$.  
Since the fact holds for every $\LZr$ phrase, $|\LZ(T)| \leq |\LZr(T)|$ holds. 
Conversely, 
if $\LZ(T)$ chooses a substring as an LZ phrase,  
then there exists an $\LZr$ phrase starting at a position on the LZ phrase and including the starting position of the LZ phrase. 
This is because the LZ phrase occurs previously in $T$ and 
the $\LZr$ phrase is the longest substring occurring previously in $T$. 
Since this fact holds for every LZ phrase, $|\LZr(T)| \leq |\LZ(T)|$ holds. 
Therefore, $|\LZr(T)| = |\LZ(T)|$ holds.

\smallskip
\noindent \textbf{The proof of $|\LZRR(T)| \leq |\LZOR(T)|$.}
$\LZOR(T) = f_{1}, \ldots, f_{k}$ parses $T$ in the left-to-right order such that each phrase is the longest substring occurring subsequently in $T$. 

A key idea of this proof is that if $\LZOR(T)$ can choose a substring at a position as an LZOR phrase 
then $\LZRR(T)$ also can choose the substring as an LZRR phrase. 
This is because candidate phrases with right reference positions are always valid phrases in LZRR parsing. 
Since the fact holds for every position on $T$, $|\LZRR(T)| \leq |\LZOR(T)|$ holds. 

\smallskip
\noindent \textbf{The proof of $|\LZOR(T)| = |\LZ'(T^{R})|$.}
Parsing a string in the left-to-right order using 
the longest substring occurring subsequently in the string is equal to 
parsing the reversed string in the right-to-left order using 
the longest substring occurring previously in the reversed string. 
Thus, $|\LZOR(T)| = |\LZr(T^R)|$ holds.

\section{Experiments}\label{sec:exp}
In this section, 
we demonstrate the effectiveness of LZRR parsing with benchmark strings. 
We used two types of strings of pseudo-real and real repetitive collections 
in the Pizza \& Chili corpus downloadable from \url{http://pizzachili.dcc.uchile.cl}.
We compared our LZRR parsing with LZ77 parsing and lex-parse. 
We used execution time, memory, and number of phrases as evaluation measures for each method.
The C++ programming language was used for implementing all the parsing algorithms. 
The implementations used in this experiment are available at \url{https://github.com/TNishimoto/lzrr}.
LZ77 and lex-parse were implemented in the standard manner and work 
in time and space linear to string length using $\SA, \ISA$, and $\LCP$ arrays.
For each method, we computed two sets of phrases for original string $T$ and reverse string $T^R$, respectively, and we took the set with the smaller number of phrases. 
We denote numbers of phrases as $|LZ77|$, $|LEX|$, and $|LZRR|$ for parsing algorithms 
of LZ77, lex-parse~(LEX), and LZRR, respectively. 
We performed all the experiments on one core of a 
quad-core Intel(R) Xeon(R) E5-2680 v2 (2.80 GHz) CPU with 256 GB of memory. 
\subsection{Results}
\begin{table}[t]
    \caption{The number of phrases for each method. 
    The smallest number of phrases for each string is depicted in bold.} 
    
    \label{table:exp1} 
    \scriptsize
    \center{	
    
    \begin{tabular}{|l|r||r|r|r||r|}
        \hline
       String & String length & $|LZ77|$ & $|LEX|$ & $|LZRR|$ & $\frac{|LZRR|}{|LZ77|}$ \\ \hline \hline
        fib41 & 267,914,296 & 22  & {\bf 4}  & 5  & 0.227  \\ \hline 
        rs.13 & 216,747,218 & 52  & {\bf 40}  & 51  & 0.981  \\ \hline
        tm29 & 268,435,456 & 56  & 43  & {\bf 31}  & 0.554  \\ \hline \hline
        dblp.xml.00001.1 & 104,857,600 & 59,385  & 58,537  & {\bf 55,127}  & 0.928  \\ \hline
        dblp.xml.00001.2 & 104,857,600 & 59,556  & 60,220  & {\bf 55,122}  & 0.926  \\ \hline
        dblp.xml.0001.1 & 104,857,600 & 78,167  & 82,879  & {\bf 73,584}  & 0.941  \\ \hline
        dblp.xml.0001.2 & 104,857,600 & 78,158  & 99,467  & {\bf 73,583}  & 0.941  \\ \hline
        sources.001.2 & 104,857,600 & 294,994  & 466,074  & {\bf 287,411}  & 0.974  \\ \hline
        dna.001.1 & 104,857,600 & 308,355  & 307,329  & {\bf 295,354}  & 0.958  \\ \hline
        proteins.001.1 & 104,857,600 & 355,268  & 364,024  & {\bf 337,711}  & 0.951  \\ \hline
        english.001.2 & 104,857,600 & 335,815  & 487,586  & {\bf 324,282}  & 0.966  \\ \hline \hline
        einstein.de.txt & 92,758,441 & 34,287  & 37,719  & {\bf 31,798}  & 0.927  \\ \hline
        einstein.en.txt & 467,626,544 & 89,437  & 96,487  & {\bf 83,368}  & 0.932  \\ \hline
        world\_leaders & 46,968,181 & 175,670  & 179,503  & {\bf 165,626}  & 0.943  \\ \hline
        influenza & 154,808,555 & 769,286  & 764,634  & {\bf 714,320}  & 0.929  \\ \hline
        kernel & 257,961,616 & 793,915  & 794,058  & {\bf 741,556}  & 0.934  \\ \hline
        cere & 461,286,644 & 1,695,631  & 1,649,448  & {\bf 1,597,657}  & 0.942  \\ \hline
        coreutils & 205,281,778 & 1,441,384  & 1,439,918  & {\bf 1,359,606}  & 0.943  \\ \hline
        Escherichia\_Coli & 112,689,515 & 2,078,512  & 2,014,012  & {\bf 1,961,296}  & 0.944  \\ \hline
        para & 429,265,758 & 2,332,657  & 2,238,362  & {\bf 2,200,802}  & 0.943  \\ \hline
    \end{tabular} 
    }
\end{table}

\begin{table}[htbp]
    \scriptsize
    \caption{The execution time and memory for each method.}
    \label{table:exp2} 
    \center{	

    \begin{tabular}{|l|r|r|r|r|r|r|r|}
        \hline
 &  & \multicolumn{3}{|c|}{Execution time [sec]} & \multicolumn{3}{|c|}{Memory consumption [MB]} \\ \hline
String & String length & LZ77 & LEX & LZRR & LZ77 & LEX & LZRR \\ \hline  \hline
einstein.de.txt & 92,758,441 & 24  & 16  & 27  & 2,266  & 2,266  & 3,808  \\ \hline
einstein.en.txt & 467,626,544 & 130  & 85  & 147  & 11,418  & 11,418  & 19,196  \\ \hline
world\_leaders & 46,968,181 & 8  & 5  & 16  & 1,148  & 1,148  & 1,939  \\ \hline
influenza & 154,808,555 & 42  & 27  & 51  & 3,781  & 3,781  & 6,351  \\ \hline
kernel & 257,961,616 & 71  & 47  & 88  & 6,299  & 6,299  & 10,602  \\ \hline
cere & 461,286,644 & 131  & 90  & 500  & 11,263  & 11,263  & 18,925  \\ \hline
coreutils & 205,281,778 & 56  & 37  & 68  & 5,013  & 5,013  & 8,453  \\ \hline
Escherichia\_Coli & 112,689,515 & 32  & 22  & 46  & 2,752  & 2,752  & 4,632  \\ \hline
para & 429,265,758 & 125  & 85  & 203  & 10,481  & 10,481  & 17,609  \\ \hline
    \end{tabular} 
    }
\end{table}
Table~\ref{table:exp1} shows the number of phrases for each method. 
The number of LZRR phrases was smaller than that of LZ77 phrases for all benchmark strings. 
Specifically, 
the number of LZRR phrases was approximately five percent smaller than that of LZ77 for all the strings except for fib41, rs.13, and tm29.
The number of LZRR phrases was smaller that of lex-parse phrases for most of the strings. 

Table~\ref{table:exp2} shows execution time and memory on limited benchmark strings for each method. 
The table for all the strings is presented in Appendix. 
Although our LZRR parsing needs $O(n^2 \alpha_{n^2}(n^2))$ time, 
the execution time was at most four times slower than that of LZ77 parsing.
This is because the number of while-loops in Algorithm~\ref{algo:lp} is 
much smaller than $n$ in practice. 
The memory for LZRR parsing was at most two times larger than that for LZ77 parsing. 
This is because the proposed algorithm needs the data structure for $\LF$ along with $\SA, \ISA$, and $\LCP$ arrays. 
\section{Conclusions}
We presented a new bidirectional parsing algorithm named Lempel-Zip 77 parsing with right reference (LZRR). 
The number of LZRR phrases is theoretically guaranteed to be smaller than that of LZ77. 
Experimental results using benchmark strings showed LZRR parsing works in practice. 
An interesting line of future work is to devise the LZRR parsing algorithm working in 
$o(n^2 \alpha_{n^2}(n^2))$ time or a compressed space.

\smallskip
\noindent \textbf{Acknowledgments.} We would like to thank Simon J. Puglisi for notifying us some related work~\cite{DBLP:conf/wea/DinklageFKLS17, DBLP:conf/latin/GagieNP18}.

\bibliographystyle{splncs03}
\bibliography{ref}

\clearpage
\section*{Appendix A: The proof of Lemma~\ref{lem:lf}}
\begin{algorithm}[htbp]
  \SetKwProg{Fn}{Function}{}{}
  Input: $P$, $L$, $j$ and $W$\; 
  Output: $\LF(P, j)$\;
  Create the union-find data structure for $\mathcal{D'} = \emptyset $\;
  $\ell \leftarrow 1$\;
  \While{$T[i+\ell-1] = T[j+\ell-1]$ and $\mathit{source}(i + \ell -1) =  \mathit{source}(j + \ell -1)$}{
      $\mathit{mUnion}(i + \ell -1, j + \ell -1)$\;
      $\ell \leftarrow \ell + 1$ \;
  }
  $\ell \leftarrow \ell - 1$ \;
  Delete the union-find data structure for $\mathcal{D'}$\;
  Initialize $W$\;
 \Return $\ell$\;
 
 \Fn{source($x$)}{
 $y \leftarrow \find(x)$ on $\mathcal{D}_{P}$\;
 \If{$i \leq y \leq i + \ell - 1$}{
 $z \leftarrow \find(W[y])$ on $\mathcal{D'}$\;
 \Return $z$\;
 }
 \Else{
 \Return $y$\;
 }
 }
 \Fn{mUnion($x, y$)}{
 $x' \leftarrow \mathit{source}(x)$\;
 $\makeset'(y)$\;
 \If{$W[x'] \not = -1 $}{
 $\union(W[x'], W[y])$ on $\mathcal{D'}$\;
 }
 \Else{
 $\makeset'(x')$\;
 $\union(x', y)$ on $\mathcal{D'}$\;
 }
 }
 \Fn{$\makeset'(x)$}{
   $p \leftarrow \makeset$ on $\mathcal{D'}$\;
   $W[x] \leftarrow p$\;
 }
 
  \caption{Modified $\LF$ algorithm.\label{algo:lf2}}
\end{algorithm}

\begin{figure}[htbp]
  \centerline{
		\includegraphics[width=0.7\textwidth]{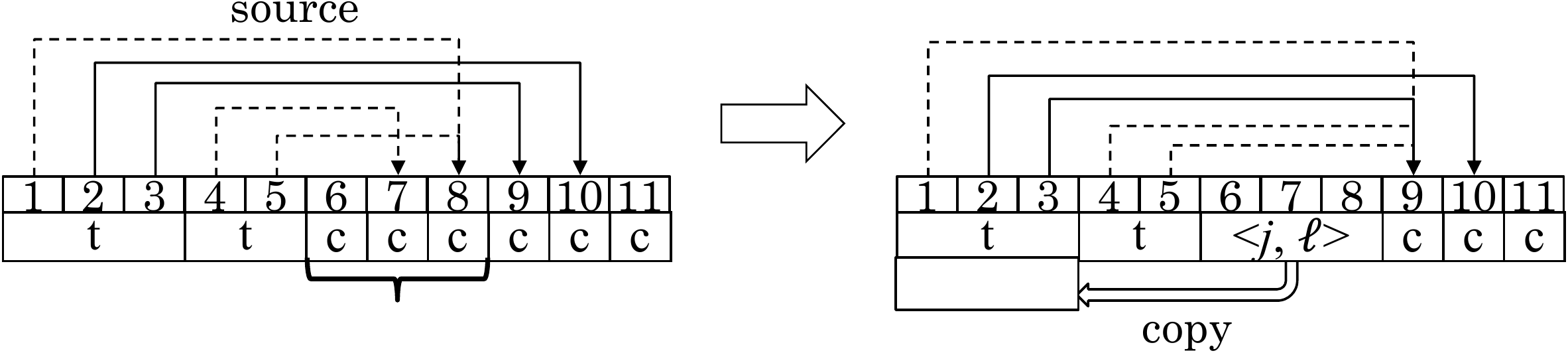} 
  }
	  \caption{ 
	  This figure represents the change of sources when a new target phrase is appended into a PBP. The ``t'' and  ``c'' in rectangles mean target phrases and character phrases, respectively.
	  }
 \label{fig:app1}
\end{figure}

To compute $\LF(P, j)$ without changing the union-find data structure $L$ for $\mathcal{D}_{P}$, 
we create an additional union-find data structure $L'$ and  
we emulate find operations on $\mathcal{D}_{P \cdot \langle j, \ell \rangle}$ using union-find data structures $L$ and $L'$ for $\mathcal{D}_{P}$. 
Since union operations are performed on $L'$, 
$L$ is not changed in $\LF$ algorithm. 

A key idea is that sources on the last phrase $\langle j, \ell \rangle$ are only changed by extending $P \cdot \langle j, \ell \rangle$. 
See Figure~\ref{fig:app1}. The left figure represents sources of positions on target phrases. 
The right figure represents the change of sources by appending new target phrase $\langle j, \ell \rangle$ to the target phrases. 
The new target phrase changes only sources on the phrase and these sources are determined by the phrase. 
This suggests that sources not on the target phrase $\langle j, \ell \rangle$ 
can be computed using $L$, and the other sources can be computed using $L$ and 
the additional union-find data structure $L'$ that manages sources of positions 
on the phrase $\langle j, \ell \rangle$. 
In addition, disjoint sets managed by $L'$ can be updated by union operations as Corollary~\ref{cor:da2}. 

Formally, let $U(P \cdot \langle j, \ell \rangle)$ be the set of positions on the phrase $\langle j, \ell \rangle$ and sources of those positions~(i.e., $U(P \cdot \langle j, \ell \rangle) = \{ i, \ldots , i+\ell-1 \} \cup \{ \vroot(P \cdot \langle j, \ell \rangle, x) \mid x \in \{ i, \ldots , i+\ell-1 \} \}$) and 
let $\mathcal{D'}_{P \cdot \langle j, \ell \rangle}$ be 
disjoint sets on $U(P \cdot \langle j, \ell \rangle)$ 
such that each set consists of all positions of the same source, 
where $i$ is the position following $P$. 
Let $\makeset'(x)$ be the operation on disjoint-sets $D$ that 
adds $\{ x \}$ into $D$ if $D$ does not contain $x$. 
Then the following lemma and corollary hold.
\begin{lemma}\label{lem:da_part4}
For a position $x \in \{ 1 ,2,  \ldots, n \}$,
if $\vroot(P, x) \not \in \{ i, i+1, \ldots, i+\ell -1 \}$ holds, 
then $\vroot(P, x) = \vroot(P \cdot \langle j, \ell \rangle, x)$ holds. 
Otherwise,  $\vroot(P \cdot \langle j, \ell \rangle, x') = \vroot(P \cdot \langle j, \ell \rangle, x)$ holds, where $x' = \vroot(P, x)$.
\end{lemma}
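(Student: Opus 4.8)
The plan is to exploit the single structural fact that appending the phrase $\langle j, \ell \rangle$ to $P$ alters the one-step reference function $g^{0}$ only on the positions $\{ i, i+1, \ldots, i+\ell-1 \}$ occupied by the new phrase, while leaving $g^{0}$ unchanged everywhere else. Under $P$ these positions lie beyond the prefix parsed by $P$, so in $B_{P}$ they are character phrases and each is its own source; under $P \cdot \langle j, \ell \rangle$ they become target phrases with $g^{0}(i+t) = j+t$ for $0 \le t \le \ell-1$. Every position outside this region keeps an identical $g^{0}$ value in both parses, and this observation drives the entire argument.

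First I would analyse the reference chain $x = x_{0} \to x_{1} \to \cdots$ obtained by repeatedly applying $g^{0}$ under $P$, which terminates at $\vroot(P, x)$. The key claim is that none of the intermediate positions $x_{0}, \ldots, x_{m-1}$ (all but the terminal one) can lie in $\{ i, \ldots, i+\ell-1 \}$: since these region positions are character phrases under $P$, the chain would halt the instant it entered the region, so the region can be reached only as the final position $x_{m} = \vroot(P, x)$, if at all. Hence the chain meets the new phrase at most once, and only at its endpoint.

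For the first case, where $\vroot(P, x) \notin \{ i, \ldots, i+\ell-1 \}$, the whole chain avoids the region, so every step uses a $g^{0}$ value common to both parses; the identical chain is therefore traced under $P \cdot \langle j, \ell \rangle$ and ends at the same character-phrase position, giving $\vroot(P, x) = \vroot(P \cdot \langle j, \ell \rangle, x)$. For the second case, where $x' = \vroot(P, x) \in \{ i, \ldots, i+\ell-1 \}$, the same chain is traced under $P \cdot \langle j, \ell \rangle$ up to $x'$, but now $x'$ is a target phrase rather than a character phrase, so the chain does not stop and instead continues exactly as the chain emanating from $x'$ would; hence $\vroot(P \cdot \langle j, \ell \rangle, x) = \vroot(P \cdot \langle j, \ell \rangle, x')$.

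The main obstacle I anticipate is not the two cases themselves but justifying rigorously that the chain visits the region only at its terminal position, together with a clean treatment of the degenerate situations — in particular when $x$ already lies in $\{ i, \ldots, i+\ell-1 \}$ (so $x' = x$ and the second identity is trivial) and when the chain has length zero. I would also invoke validity of $P \cdot \langle j, \ell \rangle$ to ensure that $\vroot(P \cdot \langle j, \ell \rangle, \cdot)$ is well defined, i.e. that the continued chain in the second case terminates at a character phrase within $n$ steps.
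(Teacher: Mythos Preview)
Your proposal is correct and follows essentially the same approach as the paper: both arguments hinge on the observation that the positions $\{i,\ldots,i+\ell-1\}$ are character phrases in $B_{P}$ (hence can appear only as the terminal vertex of the reference chain from $x$), and that appending $\langle j,\ell\rangle$ alters $g^{0}$ only on those positions. Your write-up is more explicit about tracing the chain and about the degenerate cases, but the underlying reasoning is the same as the paper's terse proof.
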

\begin{proof}
$T[r]$ is a character phrase on $B_{P}$ for each position $r \in \{ i, i+1, \ldots, i+\ell -1 \}$. 
If the source $x'$ of $x$ on $B_{P}$ is not a position in $\{ i, i+1, \ldots, i+\ell -1 \}$, 
then $x$ does not reach any position in $\{ i, \ldots, i+\ell -1 \}$. 
When a phrase is appended into $P$, 
the source $x'$ is changed if and only if the character phrase on $x'$ is changed. 
Thus, the source of $x$ is not changed by appending $\langle j, \ell \rangle$ into $P$, i.e., 
$\vroot(P, x) = \vroot(P \cdot \langle j, \ell \rangle, x)$. 

Otherwise, the source $x'$ is in $\{ i, \ldots, i+\ell -1 \}$ on $B_{P}$ and 
$x'$ has a source $x''$  on $B_{P \cdot \langle j, \ell \rangle}$ because $T[x']$ is not a character phrase on $B_{P \cdot \langle j, \ell \rangle}$. 
Since the source of $x'$ is that of $x$ on $B_{P \cdot \langle j, \ell \rangle}$, 
Lemma~\ref{lem:da_part4} holds. 

\end{proof}
\begin{corollary}\label{lem:da_part5}
(1) For an integer $x \in \{ i, \ldots, i + \ell -1 \}$, 
there exists a set $X \in \mathcal{D'}_{P \cdot \langle j, \ell \rangle}$ that 
contains two positions $x$ and $\vroot(P \cdot \langle j, \ell \rangle, x)$.
(2) $\mathcal{D'}_{P \cdot \langle j, \ell+1 \rangle}$ can be created by 
performing $O(1)$ $\union$ and $\makeset'$ operations on $\mathcal{D'}_{P \cdot \langle j, \ell \rangle}$. 
\end{corollary}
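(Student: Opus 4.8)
The plan is to prove the two parts separately: part (1) will follow directly from the defining property of $\mathcal{D'}$, while part (2) will rest on Lemma~\ref{lem:da_part4} together with Corollary~\ref{cor:da2}.

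For part (1), recall that $\mathcal{D'}_{P \cdot \langle j, \ell \rangle}$ partitions $U(P \cdot \langle j, \ell \rangle)$ so that two positions share a set exactly when they share a source. First I would fix $x \in \{ i, \ldots, i+\ell-1 \}$ and set $y = \vroot(P \cdot \langle j, \ell \rangle, x)$. Both lie in $U(P \cdot \langle j, \ell \rangle)$: the position $x$ because it is on the phrase, and $y$ because it is the source of such a position. Since $y$ is a source, $T[y]$ is a character phrase in $B_{P \cdot \langle j, \ell \rangle}$, so $\vroot(P \cdot \langle j, \ell \rangle, y) = y$ by case (ii) of the definition of $\vroot$. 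Hence $x$ and $y$ carry the same source $y$ and therefore lie in a common set $X$, which is the claim.

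For part (2), the key observation is that extending the last phrase from $\langle j, \ell \rangle$ to $\langle j, \ell+1 \rangle$ is precisely the act of converting the character phrase at position $i+\ell$ in $B_{P \cdot \langle j, \ell \rangle}$ into a target phrase that references $j+\ell$; equivalently, it appends the length-one phrase $\langle j+\ell, 1 \rangle$ after $P \cdot \langle j, \ell \rangle$. I would therefore apply Lemma~\ref{lem:da_part4} to this single-position conversion: a position $x$ changes its source only if $\vroot(P \cdot \langle j, \ell \rangle, x) = i+\ell$, and every such $x$ acquires the new source $s := \vroot(P \cdot \langle j, \ell+1 \rangle, i+\ell)$. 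By Corollary~\ref{cor:da2}(1), validity of $P \cdot \langle j, \ell+1 \rangle$ yields $\find(i+\ell) \neq \find(j+\ell)$, so the reference from $i+\ell$ to $j+\ell$ does not loop back and $s = \vroot(P \cdot \langle j, \ell \rangle, j+\ell)$ is well defined. With the source changes pinned down, I would argue that the only elements new to $U(P \cdot \langle j, \ell+1 \rangle)$ relative to $U(P \cdot \langle j, \ell \rangle)$ are $i+\ell$ itself and possibly $s$, since every other phrase position either keeps its old source (already present) or is re-sourced to $s$. The update then reduces to a short case analysis: if $i+\ell$ already appears in $U(P \cdot \langle j, \ell \rangle)$ as a source, the set containing it is exactly $\{i+\ell\} \cup \{ x : \vroot(P \cdot \langle j, \ell \rangle, x) = i+\ell \}$ and must merge into the set of $s$; otherwise $i+\ell$ is brand new. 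In every case a single $\union(i+\ell, s)$ together with at most two $\makeset'$ operations (introducing $i+\ell$, and $s$ when necessary) transforms $\mathcal{D'}_{P \cdot \langle j, \ell \rangle}$ into $\mathcal{D'}_{P \cdot \langle j, \ell+1 \rangle}$, giving the claimed $O(1)$ bound.

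The main obstacle I anticipate is the part (2) case analysis around self-overlapping references, where $\vroot(P \cdot \langle j, \ell \rangle, x) = i+\ell$ for positions $x$ on the phrase (for instance when $j$ lies just after $i$). There $i+\ell$ genuinely ceases to be a source upon extension, and I must verify that a single union correctly re-sources this entire block to $s$ with no set ever having to be split; this holds because union-find only merges and the re-sourced block coincides with exactly one existing set. Confirming that no further sources become new, so that the operation count stays $O(1)$, is the crux of the argument.
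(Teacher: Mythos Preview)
Your proposal is correct and matches the paper's intended reasoning. The paper states this result as a corollary of Lemma~\ref{lem:da_part4} without an explicit proof, and your argument---part~(1) directly from the definition of $\mathcal{D'}$, part~(2) by applying Lemma~\ref{lem:da_part4} to the single-position extension and handling the case split on whether $i+\ell$ was already a source---is exactly the justification implicit in the paper (and mirrored in the $\mathit{mUnion}$ routine of Algorithm~\ref{algo:lf2}).
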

We compute the source of a given position on $\{ 1, \ldots, n \}$ by $O(1)$ find queries 
on $\mathcal{D}_{P}$ and $\mathcal{D'}_{P \cdot \langle j, \ell \rangle}$ using 
Lemma~\ref{lem:da_part4} and Corollary~\ref{lem:da_part5}. 
Note that we need to compute the position on the character phrase in 
a given set to obtain the source of a given position. 
For this reason, we use the position on a character phrase as the id of the set that contains the position. 
We can maintain such id using an additional array of length $m$ with the same time complexity, 
where $m$ is the cardinality of disjoint-sets. 

We also note that we need to convert integers in $\mathcal{D'}_{P \cdot \langle j, \ell \rangle}$ 
to consecutive integers. 
This is because disjoint sets of $L'$ are on consecutive integers 
since $\makeset$ creates the element $\{ m+1 \}$. 
Thus, we use an array $W$ of size $n$, 
where $W[x]$ stores the integer in $L'$ that corresponds to $x$ if $x \in U(P \cdot \langle j, \ell \rangle)$; otherwise $W[x]=-1$.
This array also enables us to emulate $\makeset'$ operations. 
Since the size of $W$ is $n$, 
we reuse $W$ during the LZRR parsing algorithm, and 
the algorithm creates the array in advance.
It takes $O(n)$ time and space.
$W$ can be initialized in $O(m)$ time, where $m$ is the number of positive integers in $W$.  

Algorithm~\ref{algo:lf2} shows the modified algorithm for computing $\LF(P, j)$ function 
using Lemma~\ref{lem:da_part4} and Corollary~\ref{lem:da_part5}. 
Algorithm~\ref{algo:lf2} computes $\LF(P, j)$ by $O(\ellmax)$ union and find operations and does not perform union operations on $\mathcal{D}_{P}$, where $\ellmax$ is the length of the longest valid phrase following $P$ with reference position $j$. 
As a result, Lemma~\ref{lem:lf} holds. 

Note that Algorithms~\ref{algo:lf} and~\ref{algo:lf2} can fail if
there exists an invalid PBP $P \cdot \langle j, \ell' \rangle$ for an integer $\ell' \in \{ 1, 2, \ldots, \ellmax \}$. 
If such an integer exists, then algorithms return $\ell' - 1$ and fail. 
However, such cases do not occur because 
we cannot remove infinite loops of references from an invalid PBP by appending phrases into the PBP. 

\section*{Appendix B: Computing $\SA_{k}[1..\ell]$ and $\LCE(k, \SA_{k}[1]), \ldots, \LCE(k, \SA_{k}[\ell])$ }
We show that we can compute $\SA_{k}[1..\ell]$ and $\LCE(k, \SA_{k}[1]), \ldots, \LCE(k, \SA_{k}[\ell])$ for a given $k$ and $\ell$ 
in $O(\ell)$ time using $T$ and $\SA, \ISA, \LCP$ arrays. 

We use the known fact that $\LCE(\SA[i], \SA[j]) = \min \{ \LCP[i+1], \ldots, \LCP[j] \}$ holds for two integers $1 \leq i < j \leq n$. 
When $\SA_{k}[1..\ell']$ stores the permutation of $\SA[i'..j']$ containing $k$ for some integer $\ell'$, 
$\SA_{k}[1..\ell'+1]$ can store $\SA[i'-1]$ or $\SA[j'+1]$ by the above fact, 
where $i'$ and $j'$ are integers such that $j' - i' + 1 = \ell$. 
Then $\SA_{k}[1..\ell+1]$ is also the permutation of a subarray of $\SA$ containing $k$. 
Thus, we compute $\SA_{k}[1..\ell]$ by using the above observation. 

We compute $\SA_{k}[\ell'+1]$ using $i', j', p$ and $q$, where 
$p = \LCE(k, \SA[i'])$ and $q = \LCE(k, \SA[j'])$. 
Since $\LCE(k, \SA[i'-1]) = \min \{ \LCP[i'], p \}$ and $\LCE(k, \SA[j'+1]) = \min \{ \LCP[j'+1], q \}$, 
$\SA_{k}[\ell'+1]$ can be computed in constant time. 
In addition, we can appropriately update the four parameters in constant time for $\SA_{k}[\ell'+2]$.
Therefore, we can compute $\SA_{k}[1..\ell]$ and $\LCE(k, \SA_{k}[1]), \ldots, \LCE(k, \SA_{k}[\ell])$ in $O(\ell)$ time and constant working space using a simple algorithm. 
\qed

\section*{Appendix C: The proof of the upper bound of LZRR phrases }
\begin{figure}[htbp]
  \centerline{
		\includegraphics[width=0.6\textwidth]{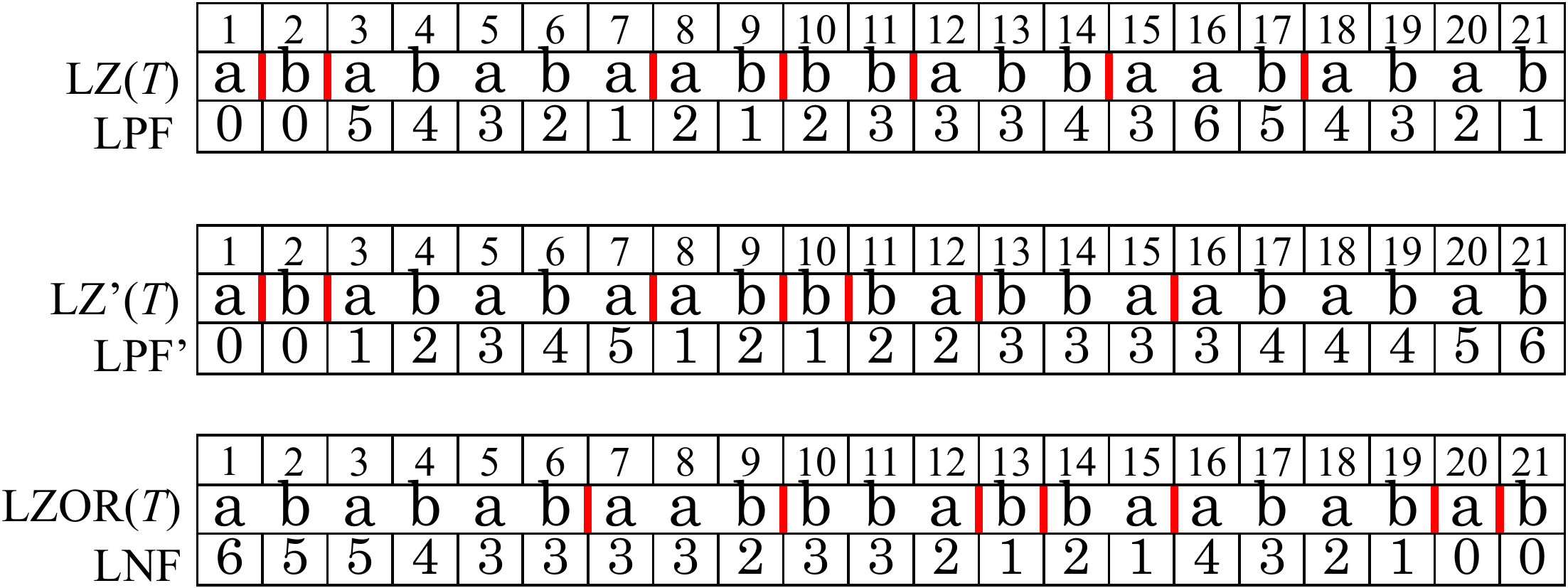} 
  }
	  \caption{ 
		  Examples of $\LZ(T)$, $\LZr(T)$ and $\LZOR(T)$. 
	  }
 \label{fig:lzor}
\end{figure}
We show three formulas using injective functions; 
for two BPs $F = f_{1}, f_{2}, \ldots, f_{k}$ and $F' = f'_{1}, f'_{2}, \ldots, f'_{k'}$ of $T$, 
if there exists an injective function $w$ that 
maps phrases in $F$ into distinct phrases in $F'$, 
then $k \leq k'$ holds.
In the remaining section, let $s_{x}$ and $e_{x}$~(resp. $s'_{x}$ and $e'_{x}$) be 
starting and ending positions of $x$-th phrase in $F$~(resp. $F'$).

\smallskip
\noindent \textbf{The proof of $|\LZr(T)| = |\LZ(T)|$.}
$\LZr(T) = f_{1}, f_{2}, \ldots, f_{k}$ parses greedily $T$ in the right-to-left order such that each phrase is the longest substring occurring previously~(left) in $T$. 
Formally, let $\LPF'$ be the integer array of length $n$ such that 
$\LPF'[i]$ stores the length of the longest substring of $T$ ending at position $i$ and occurring on $T[1..i-1]$ for all $i \in \{ 1, 2, \ldots , n \}$, 
i.e., $\LPF'[i] = \max (\{ 0 \} \cup \{\ell \mid \ell \in [1..i], |\Occ(T[..i-1], T[i-\ell+1..i])| > 0 \} )$. 
Then $\LZr(T) = f_{1}, \ldots, f_{k}$ is the valid BP of $T$ such that  
for all $x \in \{ 1, 2, \ldots , k \}$, 
the starting position $s_{x}$ of $f_{x}$ is $s_{x+1} - \max \{ 1,  \LPF'[s_{x+1}-1] \}$, where 
$s_{k+1} = n$. 
Figure~\ref{fig:lzor} illustrates examples of $\LZ(T)$ and $\LZr(T)$.

For $\LZ(T) = f_{1}, f_{2}, \ldots, f_{k}$ and $\LZr(T) = f'_{1}, f'_{2}, \ldots, f'_{k'}$, 
we define the function $w(x')$ that returns the integer $x$ such that 
$f_{x}$ contains $e'_{x'}$~(i.e., $s_{x} \leq e'_{x'} \leq e_{x}$ holds).
$w$ is injective if 
the starting position of each $\LZr$ phrase is not larger than that of the LZ phrase containing the ending position of the $\LZr$ phrase, i.e., $s'_{x'} \leq s_{w(x')}$ holds for all $x' \in \{ 1, 2, \ldots,  k' - 1 \}$. 
This is because no LZ phrases contain two ending positions in $\LZr$ phrases, i.e., 
no integer exists $y'$ such that $w(y') = w(y'+1)$ holds if $s'_{x'} \leq s_{w(x')}$ holds for all $x'$.

We show $s'_{x'} \leq s_{w(x')}$ using the substring $S$ starting at the starting position of $f_{w(x')}$ and 
ending at the ending position of $f'_{x'}$, i.e., $S = T[s_{w(x')}..e'_{x'}]$. 
When $|f'_{x'}| \geq |S|$ holds, $s'_{x'} \leq s_{w(x')}$ holds 
because $S$ is a suffix of $f'_{x'}$ and $S$ is a prefix of $f_{w(x')}$.
Thus, we show $|f'_{x'}| \geq |S|$ always holds.
If $S$ occurs in previously on $T$, then $|f'_{x'}| \geq |S|$
because $\LZr$ chooses the longest substring ending at position $e_{x'}$ and occurring on $T[1..e_{x'}-1]$.
Otherwise, $|S| = 1$ and $S = f'_{x'} = f_{w(x')}$ hold since $S$ is a new character, i.e., $\Occ(T[1..e'_{x'}-1], T[e'_{x'}]) = \emptyset$.  
Therefore, $s'_{x'} \leq s_{w(x')}$ holds for all $x' \in \{ 1, 2, \ldots, k' - 1 \}$. 

Similarly, $|\LZr(T)| \geq |\LZ(T)|$ holds by constructing the injective function 
that returns the integer $x'$ 
such that $f'_{x'}$ contains $s_{x}$ for a given $x$.

\smallskip
\noindent \textbf{The proof of $|\LZRR(T)| \leq |\LZOR(T)|$.}
$\LZOR(T) = f_{1}, f_{2}, \ldots, f_{k}$ parses $T$ in the left-to-right order such that each phrase is the longest substring occurring subsequently in $T$. 
Formally, let $\LNF$ be the integer array of length $n$ such that 
$\LNF[i]$ stores the length of the longest substring of $T$ starting at position $i$ and occurring on $T[i+1..]$ for all $i \in \{ 1, 2, \ldots , n \}$, i.e., $\LNF[i] = \max (\{ 0 \} \cup \{\ell \mid \ell \in [1..n-i+1], |\Occ(T[i+1..], T[i..i+\ell-1])| > 0 \}$). 
Then $\LZOR(T) = f_{1}, f_{2}, \ldots, f_{k}$ is the BP of $T$ such that 
for all $x \in \{ 2, 3, \ldots, k \}$, 
the starting position $s_{x}$ of $f_{x}$ is $s_{x-1} + \max \{ 1,  \LNF[s_{x-1}] \}$ and 
$s_{1} = 1$.
Figure~\ref{fig:lzor} illustrates an example of $\LZOR(T)$.

For $\LZRR(T) = f_{1}, f_{2}, \ldots, f_{k}$ and $\LZOR(T) = f'_{1}, f'_{2}, \ldots, f'_{k'}$, 
let $w(x)$ be the function that returns the integer $x'$ such that 
$f'_{x'}$ contains $s_{x}$.
Then $w$ is injective if $w(x) < w(x+1)$ holds for all $x \in \{ 1, 2,  \ldots ,k-1 \}$. 
We use the following lemma. 
\begin{lemma}\label{lem:right_ref}
    Let $P = f_{1}, \ldots , f_{b}$ be a valid PBP of $T$. 
    Then $P \cdot \langle j, \ell \rangle $ is also valid  
    for any right target phrase $\langle j, \ell \rangle$, i.e., $T[i..i+\ell-1] = T[j..j+\ell-1]$ and $j > i$ hold, where $i = |f_{1} \cdots f_{b}| + 1$. 
\end{lemma}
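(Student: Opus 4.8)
The plan is to prove validity directly from the definition, i.e.\ to show that $g^{n}(x) \in \Sigma$ holds for every position $x \in \{1, \ldots, n\}$ in $B_{P \cdot \langle j, \ell \rangle}$. The essential structural observation is that the appended phrase references a block lying strictly to its right: each position $i+m$ with $m \in \{0, \ldots, \ell-1\}$ satisfies $g^{0}(i+m) = j + m = (i+m) + d$ for the fixed positive displacement $d = j - i > 0$. Thus following the reference of any position inside the new block $\{i, \ldots, i+\ell-1\}$ moves it strictly to the right by exactly $d$, and the reference targets $\{j, \ldots, j+\ell-1\}$ all lie in $\{i+1, \ldots, n\}$.

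First I would analyze the reference chain starting from a position $x_{0} \in \{i, \ldots, i+\ell-1\}$ inside the new phrase. As long as the chain stays inside the block it increases by exactly $d$ at each step, so it is strictly increasing and can never revisit a position. Because $j + \ell - 1 \leq n$, the first time the chain leaves the block it lands at some position in $\{i+\ell, \ldots, n\}$, and every such position is a character phrase of $B_{P \cdot \langle j, \ell \rangle}$ (it belongs to the uncovered suffix). Hence the chain is acyclic and reaches $\Sigma$ after at most $\ell$ references, so every position inside the new phrase has a well-defined source.

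Next I would lift this to an arbitrary position $x$ by invoking Lemma~\ref{lem:da_part4}. If $\vroot(P, x) \notin \{i, \ldots, i+\ell-1\}$, the lemma gives $\vroot(P, x) = \vroot(P \cdot \langle j, \ell \rangle, x)$, so the chain of $x$ terminates at the same character phrase as in the valid PBP $P$. Otherwise $\vroot(P, x) = x'$ for some $x' \in \{i, \ldots, i+\ell-1\}$, and the lemma reduces the source of $x$ under $P \cdot \langle j, \ell \rangle$ to that of $x'$, which was handled in the previous step. In both cases the chain of $x$ reaches a character phrase without repeating a position, so it terminates in at most $n$ steps, giving $g^{n}(x) \in \Sigma$; therefore $P \cdot \langle j, \ell \rangle$ is valid.

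The main obstacle is the self-overlapping case, in which some reference targets $j+m$ fall back inside the new block $\{i, \ldots, i+\ell-1\}$ and could a priori close a reference cycle. This is precisely what the strict monotonicity of the displacement $d > 0$ rules out, since any cycle would force the chain to return to a strictly smaller position. I would keep the bound $j + \ell - 1 \leq n$ explicit throughout, as it is exactly what guarantees that exiting the block lands in the character-phrase region $\{i+\ell, \ldots, n\}$ rather than running off the end of the string.
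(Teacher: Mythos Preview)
Your argument is correct but takes a different route from the paper's three-line proof. The paper invokes Corollary~\ref{cor:da2} inductively: since every position in $\{i,\ldots,n\}$ is a character phrase in $B_P$ (and, because $j+\ell'-1>i+\ell'-2$, both $i+\ell'-1$ and $j+\ell'-1$ are still character phrases in $B_{P\cdot\langle j,\ell'-1\rangle}$), each such position is its own source, so $\vroot(\cdot,i+\ell'-1)\neq\vroot(\cdot,j+\ell'-1)$ and the one-step extension is valid for every $\ell'\in\{1,\ldots,\ell\}$. You instead work directly from the definition of validity, using the positive displacement $d=j-i$ to force the reference chain out of the new block into the character-phrase suffix, and then lift to arbitrary positions via Lemma~\ref{lem:da_part4}. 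Your approach is more elementary and makes the self-overlap case explicit; the paper's is shorter because Corollary~\ref{cor:da2} already packages the ``distinct sources $\Rightarrow$ no cycle'' step. One small caution: Lemma~\ref{lem:da_part4} as stated speaks of $\vroot(P\cdot\langle j,\ell\rangle,\cdot)$ as if it exists, so citing it to \emph{prove} validity looks circular at first glance; your Step~1 is precisely what supplies the missing sources and breaks that circularity, but it would be cleaner either to say so explicitly or to argue the lift directly (the chain from $x$ in $B_{P\cdot\langle j,\ell\rangle}$ coincides with the chain in $B_P$ until it first hits $\{i,\ldots,i+\ell-1\}$, after which Step~1 applies).
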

\begin{proof}
 $T[i..n]$ are represented character phrases on $B_{P}$ since $T[i..n]$ has not been parsed. 
 This means that $ \vroot(P, i+\ell'-1) \not = \vroot(P, j+\ell'-1)$ for any $\ell' \in \{ 1, 2, \ldots, \ell \}$. 
 Therefore $P \cdot \langle j, \ell' \rangle$ is valid by Corollary~\ref{cor:da2}. 
\end{proof}
$w(x) < w(x+1)$ holds if $\max \{1, \LNF[s_{x}] \} \leq |f_{x}|$ holds for all $x \in \{ 1, 2, \ldots k \}$. 
Recall that $\LP$ function returns the valid longest bidirectional phrase.  
The $\LNF$ array and Lemma~\ref{lem:right_ref} 
suggest that the length of the phrase of $\LZRR(T)$ starting at position $i$ 
is at least $\max \{1, \LNF[i] \}$. 
Thus $w(x) < w(x+1)$ holds for all $x$, 
$w$ is injective, and hence $|\LZRR(T)| \leq |\LZOR(T)|$ holds.

\smallskip
\noindent \textbf{The proof of $|\LZOR(T)| = |\LZr(T^R)|$.}
$|\LZOR(T)| = |\LZ'(T^{R})|$ holds clearly because 
$\LPF'[x] = \LNF_{T^R}[n-x+1]$ holds for all $x \in \{ 1,2,  \ldots, n \}$, 
where $\LNF_{T^R}$ is the $\LNF$ array of $T^{R}$.

\section*{Appendix D: The proof of Lemma~\ref{lem:maxlf} }
\begin{proof}
Recall that $\LCE(i, \SA_{i}[1]) \geq \cdots \geq \LCE(i, \SA_{i}[n])$ holds. 
On the other hand, $\LF(P, j) \leq \LCE(i, j)$ holds for all $j \in \{ 1 , 2, \ldots, n \}$ because 
$\LF(P, j)$ represents the length of the common prefix of $T[i..]$ and $T[j..]$. 
Therefore, $\ellmax = \ell_{n} = \ell_{\kmax}$ holds, which means 
at least one position $j'$ exists such that $\LF(P, j') = \ellmax$ in $\SA_{i}[..\kmax]$. 
\end{proof}

\section*{Appendix E: Experiments }

\begin{table}[htbp]
    \scriptsize
    \caption{The full version of Table~\ref{table:exp2}.}
    \label{table:exp3} 
    \center{	

    \begin{tabular}{|l|r|r|r|r|r|r|r|}
        \hline
 &  & \multicolumn{3}{|c|}{Execution time [sec]} & \multicolumn{3}{|c|}{Memory consumption [MB]} \\ \hline
String & String length & LZ77 & LEX & LZRR & LZ77 & LEX & LZRR \\ \hline  \hline
fib41 & 267,914,296 & 99  & 74  & 113  & 6,542  & 6,542  & 11,978  \\ \hline
rs.13 & 216,747,218 & 79  & 59  & 110  & 5,292  & 5,293  & 9,654  \\ \hline
tm29 & 268,435,456 & 108  & 81  & 142  & 6,554  & 6,555  & 11,797  \\ \hline
dblp.xml.00001.1 & 104,857,600 & 30  & 21  & 42  & 2,561  & 2,561  & 4,308  \\ \hline
dblp.xml.00001.2 & 104,857,600 & 30  & 20  & 41  & 2,561  & 2,561  & 4,305  \\ \hline
dblp.xml.0001.1 & 104,857,600 & 30  & 20  & 42  & 2,561  & 2,561  & 4,303  \\ \hline
dblp.xml.0001.2 & 104,857,600 & 30  & 20  & 41  & 2,561  & 2,561  & 4,303  \\ \hline
sources.001.2 & 104,857,600 & 28  & 19  & 41  & 2,561  & 2,561  & 4,302  \\ \hline
dna.001.1 & 104,857,600 & 30  & 20  & 41  & 2,561  & 2,561  & 4,302  \\ \hline
proteins.001.1 & 104,857,600 & 31  & 21  & 42  & 2,561  & 2,561  & 4,302  \\ \hline
english.001.2 & 104,857,600 & 30  & 21  & 42  & 2,561  & 2,561  & 4,302  \\ \hline
einstein.de.txt & 92,758,441 & 24  & 16  & 27  & 2,266  & 2,266  & 3,808  \\ \hline
einstein.en.txt & 467,626,544 & 130  & 85  & 147  & 11,418  & 11,418  & 19,196  \\ \hline
world\_leaders & 46,968,181 & 8  & 5  & 16  & 1,148  & 1,148  & 1,939  \\ \hline
influenza & 154,808,555 & 42  & 27  & 51  & 3,781  & 3,781  & 6,351  \\ \hline
kernel & 257,961,616 & 71  & 47  & 88  & 6,299  & 6,299  & 10,602  \\ \hline
cere & 461,286,644 & 131  & 90  & 500  & 11,263  & 11,263  & 18,925  \\ \hline
coreutils & 205,281,778 & 56  & 37  & 68  & 5,013  & 5,013  & 8,453  \\ \hline
Escherichia\_Coli & 112,689,515 & 32  & 22  & 46  & 2,752  & 2,752  & 4,632  \\ \hline
para & 429,265,758 & 125  & 85  & 203  & 10,481  & 10,481  & 17,609  \\ \hline
    \end{tabular} 
    }
\end{table}

\end{document}